\DeclareMathAlphabet{\mathpzc}{OT1}{pzc}{m}{it}
\DeclareMathOperator*{\argmin}{arg\,min}
\DeclareMathOperator*{\arginf}{arg\,inf}
\newcommand{\bmin}{\ensuremath{S_{\rm L}}}
\newcommand{\bmax}{\ensuremath{S_{\rm U}}}
\newcommand{\Lnash}{\ensuremath{\mathcal{L}^{\rm nf}}}
\newcommand{\Lopt}{\ensuremath{\mathcal{L}^{\rm opt}}}
\newcommand{\fnash}{\ensuremath{f^{\rm nf}}}
\newcommand{\fopt}{\ensuremath{f^{\rm opt}}}
\newcommand{\poa}{\ensuremath{{\rm PoA}}}
\newcommand{\gee}{\mathcal{G}}
\newcommand{\qed}{\hfill\blacksquare}
\newcommand{\geelc}{\mathcal{G}^{\rm lc}}
\newcommand{\sdist}{\mathscr{S}}
\newcommand{\sdistm}{\mathscr{S}(\sbar)}
\newcommand{\sdistmbi}{\mathscr{S}^{\rm bi}(\sbar)}
\newcommand{\sbar}{\overline{s}}
\newcommand{\slow}{s_\mathpzc{l}^{(\sbar,G,k)}}
\newcommand{\supp}{s_\mathpzc{u}^{(\sbar,G,k)}}
\newcommand{\Tsmc}{T}
\newcommand{\kagn}{k_{\rm agn}}
\newcommand{\kg}{k_{(G)}}
\newcommand{\ksg}{k_{(\sbar,G)}}
\newcommand{\ks}{k_{(\sbar)}}
\newcommand{\slone}{S_{\mathpzc{l}1}}
\newcommand{\sutwo}{S_{\mathpzc{u}2}}
\newcommand{\Tau}{\mathcal{T}}
\newtheorem{theorem}{Theorem}
\newtheorem{lemma}{Lemma}[theorem]
\newtheorem{corollary}{Corollary}
\newtheorem{proposition}{Proposition}
\title{Utilizing Information Optimally to Influence Distributed Network Routing}
\author{{Bryce L. Ferguson, Philip N. Brown, and Jason R. Marden}
\thanks{This research was supported by ONR grant \#N00014-17-1-2060 and NSF grant \#ECCS-1638214}
\thanks{B. L. Ferguson (corresponding author) and J. R. Marden are with the Department of Electrical and Computer Engineering, University of California, Santa Barbara, CA, {\texttt{\{blferguson,jrmarden\}@ece.ucsb.edu}}.}
\thanks{P. N. Brown is with the Department of Computer Science, University of Colorado at Colorado Springs, {\texttt{philip.brown@uccs.edu}}}  
}
\begin{document}
\maketitle

\begin{abstract}
How can a system designer exploit system-level knowledge to derive incentives to optimally influence social behavior?
The literature on network routing contains many results studying the application of monetary tolls to influence behavior and improve the  efficiency of self-interested network traffic routing.
These results typically fall into two categories: (1) optimal tolls which incentivize socially-optimal behavior for a known realization of the network and population, or (2) robust tolls which provably reduce congestion given uncertainty regarding networks and user types, but may fail to optimize routing in general.
This paper advances the study of robust influencing, mechanisms asking how a system designer can optimally exploit additional information regarding the network structure and user price sensitivities to design pricing mechanisms which influence behavior.
We design optimal scaled marginal-cost pricing mechanisms for a class of parallel-network routing games and derive the tight performance guarantees when the network structure and/or the average user price-sensitivity is known.
Our results demonstrate that from the standpoint of the system operator, in general it is more important to know the structure of the network than it is to know distributional information regarding the user population.
\end{abstract}

\section{Introduction}
In systems whose performance is driven by social behavior, it is well known that the self-interested choices of individuals can dramatically degrade system performance. This inefficiency resulting from selfish behavior is commonly characterized by the ratio between the worst-case social welfare resulting from choices of self-interested users and the optimal social welfare; this is typically referred to as the \emph{price of anarchy}~\cite{Papadimitriou2001} and has become a highly studied area in resource allocation~\cite{Johari2004,Paccagnan2018a}, distributed control~\cite{Marden2014}, and transportation~\cite{Piliouras2013,Moradipari2018}. 
A common line of research studies how this inefficiency can be mitigated by using pricing mechanisms and information systems which incentivize users to make decisions more in line with the social optimum~\cite{Ratliff2018, Acemoglu2018}.
Naturally, an effective implementation of incentives is heavily dependent on how the users respond to the incentives.

In this paper, we study the design of incentive mechanisms to influence social behavior in a class of congestion games; our particular focus here is on the relationship between a designer's ability to effectively influence selfish behavior and the designer's uncertainty regarding various system parameters.

Specifically, we consider a network routing problem in which a unit mass of users need to be routed across a network of two parallel edges with affine latency functions. Finding a flow that minimizes the total latency in the system is straightforward if the system designer has full control over deciding the path of each user. However, a network flow resulting from users' self-interested decisions need not be optimal~\cite{Wardrop1952,Pigou1920,Roughgarden2002,Cole2018}.
Modeling the routing problem as a non-atomic congestion game, we characterize this self-interested behavior as a \emph{Nash flow}; that is, a network flow in which every user individually chooses the lowest-latency path, given the choices of other users.

Financial incentives in the form of road tolls are a common approach to mitigate the inefficiency due to selfish behavior; these tolls modify users' preferences, inducing new, more-efficient Nash flows.
In the case where the system designer is fully aware of the network topology, population size, and the users' price-sensitivity, tolls can be designed that influence users to self-route in a manner that minimizes total latency~\cite{Cole2003,Fleischer2004,Karakostas2004}. Though these tolls can induce optimal behavior, the required information can be difficult or impossible to obtain, and lack robustness to uncertainty in various system parameters~\cite{Brown2017a}.

Alternatively, in the case where the system designer is unaware of the structure of the network or population size, and knows only the possible support of user price-sensitivity, there exist robust taxation mechanisms which improve the efficiency for certain classes of networks. These tolls are not guaranteed to induce perfectly optimal behavior, but they bound the price of anarchy strictly below the nominal un-influenced value of $4/3$~\cite{Brown2017d}. The study of information value in system design has been considered in areas such as optimal control~\cite{Farokhi2015} and multi agent systems~\cite{Paccagnan2018b}.


In this paper, we seek to bridge the gap between optimal taxation mechanisms which require detailed information, and robust tolls that require less information but fail to perfectly optimize routing. 
In~\cite{Brown2017d}, the system designer is oblivious to the network structure and knows only the possible support of the distribution of user price-sensitivities; using only this information, the designer selects optimal robust tolls to minimize the price of anarchy.
Our contributions augment this information in three ways: one in which the designer additionally knows the network structure,  one in which the designer additionally knows the average user-sensitivity, and one in which both structure and average sensitivity are known.
For each information environment, we determine the tolling scheme that uses all available information \emph{optimally}, and the resulting efficiency guarantees are compared to previously-known results for robust tolls.

In comparing the performance of these mechanisms, we ask which piece of information is more valuable to the system designer when used optimally: network structure or users' average price sensitivity.
Intriguingly, the answer is highly context-dependent.
In some cases, we find that the mean-sensitivity is highly informative and is thus a more valuable piece of information than the network structure.
However, taken in worst-case, we find that the network structure is more valuable than the mean.

\section{Model and Performance Metrics}

\subsection{Routing Game}
Consider a routing problem, in which a unit mass of traffic must be directed through a parallel network from an origin node to a destination node over a set of edges $E$. A \textit{feasible flow} over the network is an assignment of traffic to each edge $f = \{f_e\}_{e\in E} \in \Delta (E)$ where $f_e \geq 0$ denotes the flow on an edge $e$ and $\Delta (E)$ denotes the standard probability simplex over the set $E$; that is, $\sum_{e\in E} f_e = 1$. To characterize transit delay, each edge $e \in E$ in the network has a latency function of the form
\begin{equation}
\ell_e(f_e) = a_e f_e + b_e.
\end{equation}
where $a_e \geq 0$ and $b_e \geq 0$. The latency on an edge is thus a non-decreasing, non-negative function of the flow on that edge. The system cost of a flow $f$ is characterized by the \textit{total latency} in the network, defined as
\begin{equation} \label{eq_total_lat}
\mathcal{L}(f) = \sum_{e \in E} f_e \cdot \ell_e(f_e),
\end{equation}
and we denote the flow that minimizes this total latency as $\fopt \in \argmin_{f \in \Delta E} \mathcal{L}(f)$. We specify a particular network by the tuple $G=(E,\{\ell_e\}_{e\in E})$.

This paper studies taxation mechanisms designed to influence the emergent collective behavior of self-interested, price-sensitive users. We model this routing problem as a congestion game where each edge $e \in E$ is assigned a flow dependent tolling function $\tau_e:[0,1] \rightarrow \mathbb{R}^+$. A user $x \in [0,1]$ has a price-sensitivity $s(x) >0$; this price-sensitivity is subjective for each user and relates the user's cost from being tolled to their cost from experiencing delays and is the reciprocal of the user's value of time. We write $e(x)$ to denote the edge chosen by user $x$. The cost function for a user $x$ that is on an edge $e(x) \in E$ can be expressed as
\begin{equation}
J_{x}(f) =  \ell_{e(x)}\left(f_{e(x)}\right)+s(x) \tau_{e(x)}\left(f_{e(x)}\right).
\end{equation}
A flow $f$ is a \textit{Nash flow} if for every user $x \in [0,1]$
\begin{equation}
J_{x}(f) \in \argmin_{e \in E} \{ \ell_e\left(f_{e}\right)+s(x) \tau_e\left(f_{e}\right) \}.
\end{equation}

A game is therefore characterized by a network $G$, player sensitivity distribution $s:[0,1] \rightarrow \mathbb{R}^+$, and a set of tolling functions $\{\tau_e\}_{e\in E}$, denoted by the tuple $(G,s,\{\tau_e\}_{e\in E})$. It is shown in \cite{Mas-Colell1984} that a Nash flow will always exist in a congestion game of this form.

\begin{figure*}
\centering
\vspace{3mm}
\begin{tabular}{| c || c | c | c | c ||}
\hline 
 & \begin{tabular}{@{}c@{}} \emph{Any sensitivity} \\ \emph{distribution} \end{tabular} & \begin{tabular}{@{}c@{}} $\sdist$ \\ \emph{mean-agnostic} \end{tabular} & \begin{tabular}{@{}c@{}} $\sdistm$ \\ \emph{mean-aware} \end{tabular} & \begin{tabular}{@{}c@{}} $s(x)$ \\ \emph{population-aware} \end{tabular} \\
\hline
\begin{tabular}{@{}c@{}} $\gee$ \\ \emph{network-agnostic}\vspace{0.1 in} \end{tabular} &
 \begin{tabular}{@{}c@{}}$\poa \left( \gee \right) =1.\overline{33}$ \\ \multicolumn{1}{r}{\cite{Roughgarden2002}} \end{tabular} &
 \begin{tabular}{@{}c@{}}$\poa^*  \left( \gee, \sdist \right) \approx 1.176$ \\ \multicolumn{1}{r}{(A),\cite{Brown2017d}} \end{tabular} &
 \begin{tabular}{@{}c@{}}$\poa^*  \left( \gee, \sdistm \right) \approx 1.1385$ \\ \multicolumn{1}{r}{(B)} \end{tabular}  &
   \\
\hline
\begin{tabular}{@{}c@{}}$G$ \\ \emph{network-aware}\vspace{0.1 in} \end{tabular} &
 & 
 \begin{tabular}{@{}c@{}}$\poa^*  \left( G, \sdist \right) \leq 1.09$ \\ \multicolumn{1}{r}{(C)} \end{tabular} & 
 \begin{tabular}{@{}c@{}}$\poa^*  \left( G, \sdistm \right) \leq 1.0494$ \\ \multicolumn{1}{r}{(D)} \end{tabular} &  
 \begin{tabular}{@{}c@{}}$\poa \left( G,T \right) =1$ \\ \multicolumn{1}{r}{\cite{Fleischer2004}} \end{tabular} \\
\hline
\end{tabular}
\caption{Price of anarchy bounds over different classes of games with optimal taxation mechanisms. Each class of games consists of parallel networks with affine latency functions. Information available to the system designer is embedded in the class of games (e.g. the network-aware classes only contain games with the same network). Smaller classes denote more certainty in the structure of the composing games, allowing for more effective toll design. These numbers are generated for $\bmin = 1$ and $\bmax = 10$, and a worst case over networks and mean sensitivities.}\vspace{-3mm}
\label{fig_table}
\end{figure*}

\subsection{Taxation Mechanisms \& Performance Metrics}
To understand the robustness of a tolling scheme, we consider the performance over a class of networks and users' sensitivities. For a network, $G$, we identify the latency functions which constitute the network by $L(G)$; further, for a family of networks $\gee$, let $L(\gee ) = \bigcup_{G\in \gee} L(G)$ be the set of all latency functions that compose the networks in $\gee$.

A \emph{taxation mechanism} $T$ maps latency functions $\ell_e$ to tolling functions $\tau_e$. For a family of networks $\gee$, this mapping is denoted  $T:L(\gee) \rightarrow \Tau$, where $\Tau$ is the set of all linear tolling functions on $[0,1]$. These tolling functions are termed \emph{scaled marginal-cost tolls} and as discussed in Section~\ref{section_main_results}, are an important class of tolling functions which are optimal in many contexts. For a class of games $\gee$, a taxation mechanism assigns a tolling function to an edge by the latency function on said edge, independently of what specific network in $\gee$ is realized. It is important to note that our formulation fits that of the classic Pigouvian-taxes~\cite{Pigou1920} given by
\begin{equation}
    \tau_e(f) = f_e \cdot \frac{d}{df_e}\ell_e(f_e) = a_e f_e,   \quad \forall \ e \in E.
\end{equation}
To formalize the notion of uncertainty in users' response to a taxation mechanism, we consider families of sensitivity distributions that can occur when the system designer is only aware of the lower bound $\bmin$ and upper bound $\bmax$ on users price sensitives. We define the set of possible sensitivity distributions as $\sdist = \{s:[0,1] \rightarrow [\bmin, \bmax] \}$. When the average price sensitivity $\sbar$ of the users is introduced to the system designer, the set of possible distributions becomes $\sdistm = \{s \in \sdist | \int_0^1 s(x)dx = \sbar \}$; it is clear that $\sdistm \subset \sdist$.

To evaluate the performance of a tolling mechanism, let $\Lnash(G,s,T)$ be the total latency on a network $G$, with price sensitivity distribution $s$, in the Nash flow $\fnash$ when tolls are assigned according to taxation mechanism\footnote{The taxation mechanism is a mapping from latency functions to tolling functions. A game with taxation mechanism $T$ is therefore denoted $(G,s,\{T(\ell_e)|\ell_e \in G\})$. For brevity, we simply denote this as $(G,s,T)$.} $T$, and let $\Lopt(G)$ be the minimum total latency which occurs under the optimal flow $\fopt$. The \textit{price of anarchy} compares the Nash flow on a network with the optimal flow; this characterizes the inefficiency of the network and can be defined as
\vspace{-2mm}
\begin{equation}
	\poa(G,s,T) = \frac{\Lnash(G,s,T)}{\Lopt(G)} \geq 1.
\end{equation}
We extend this definition to include families of networks and sensitivity distributions, i.e.,
\begin{equation}\label{eq_poa_def}
        \poa (\gee,\sdist,T)= \underset{G\in \gee}{\rm sup} \  \underset{s\in \sdist}{\rm sup}  \Bigg\{ \frac{\Lnash(G,s,T)}{\Lopt(G)} \Bigg\},
\end{equation}
such that the price of anarchy is now the worst-case inefficiency over possible networks and sensitivity distributions\footnote{A trivial taxation mechanism of assigning zero tolling functions to each edge (i.e., $\tau_e =  0$ for all $e\in E$) will have a price of anarchy of 4/3 for this class of networks~\cite{Roughgarden2002}.}. Note that the same taxation mechanism $T$ is applied to any realized network-sensitivity distribution pair.

\subsection{Optimal Tolling \& Our Contributions}

When designing a taxation mechanism, the goal of the system designer is to minimize worst-case inefficiency while the network and/or users' sensitivities are not known. Thus, we define an optimal tolling mechanism as
\begin{equation*}
T^* \in \underset{T: L(\gee) \rightarrow \Tau}{\arginf} \ \poa (\gee,\sdist,T),
\end{equation*}
such that it is the scaled marginal-cost taxation mechanism which minimizes the price of anarchy expressed in \eqref{eq_poa_def} for a given family of networks~$\gee$ and sensitivity distributions~$\sdist$. Further, we define the price of anarchy bound under an optimal tolling mechanism as
\begin{equation}
\poa^*(\gee,\sdist) \triangleq \underset{T: L(\gee) \rightarrow \Tau}{\rm inf} \ \poa (\gee,\sdist,T).
\end{equation}
In this paper, we consider networks with two parallel links (the family of which we simply denote $\gee$), to demonstrate the value of information to a system designer by comparing the price of anarchy bounds when different amounts of information are available. It is not currently known if these bounds extend to richer classes of networks, but these networks often display worst-case inefficiency over larger classes of networks and allow us to analyze the benefit of these partially informed tolls~\cite{Roughgarden2003}.

The information available to the system designer is encoded in these families of networks and sensitivity distributions. When the system designer is \emph{network-aware}, the family of networks for which the taxation mechanism is defined is a singleton (i.e., the system designer knows the network). Otherwise, we say the system designer is \emph{network-agnostic} when only a family of possible networks is known. We look at the following four cases when the system designer is aware of the network, mean sensitivity, both, or neither:

\noindent \textbf{\textit{Network-agnostic \& mean-agnostic:}} We first consider the case that the system designer knows only the lower and upper bound on user's sensitivities. This was originally considered in%
~\cite{Brown2017d} for parallel routing problems under a utilization constraint; here, we generalize this known result for networks with no utilization constraint. The price of anarchy bound under an optimal toll in this scenario is denoted
\begin{equation}
 \poa^*(\gee,\sdist) = \underset{T: L(\gee) \rightarrow \Tau}{\rm inf} \ \poa (\gee,\sdist,T),
\end{equation}
and a general expression is given in Theorem~\ref{thm_net_agn}. A realization of this is given in box (A) of Figure~\ref{fig_table}.

\noindent \textbf{\textit{Network-agnostic \& mean-aware:}} Second, we consider the scenario in which the system designer knows the mean of users' sensitivities, as well as the lower and upper bound. Here, the number of possible sensitivity distributions is reduced, and the system designer will be able to make better guarantees on worst-case inefficiency. The price of anarchy bound under an optimal toll in this scenario is denoted
\begin{equation}
 \poa^*(\gee,\sdistm) = \underset{T: L(\gee) \rightarrow \Tau}{\rm inf} \ \poa (\gee,\sdistm,T),
\end{equation}
and a general expression is given in Theorem~\ref{thm_avg}. A realization of this is given in box (B) of Figure~\ref{fig_table}.

\noindent \textbf{\textit{Network-aware \& mean-agnostic:}} Next, we consider when the system designer again knows only the lower and upper bound on users' sensitivities, but is also aware of the network structure. The system designer will be able to make better guarantees on worst-case inefficiency when there is no uncertainty in the network that was realized. The price of anarchy under an optimal toll in this scenario is denoted
\begin{equation}
 \poa^*(G,\sdist) = \underset{T: L(G) \rightarrow \Tau}{\rm inf} \ \poa (G,\sdist,T),
\end{equation}
and a general expression is given in Theorem~\ref{thm_net}. A realization of this is given in box (C) of Figure~\ref{fig_table}.

\noindent \textbf{\textit{Network-aware \& mean-aware:}} Finally, we consider the case when the system designer knows the lower bound, upper bound, and mean of users' sensitivities and also knows the network structure. Here, the system designer will be able to guarantee the best performance of any of the scenarios we consider. The price of anarchy bound under an optimal toll in this scenario is denoted
\begin{equation}
 \poa^*(G,\sdistm) = \underset{T: L(G) \rightarrow \Tau}{\rm inf} \ \poa (G,\sdistm,T),
\end{equation}
and a general expression is given in Theorem~\ref{thm_all}. A realization of this is given in box (D) of Figure~\ref{fig_table}.

Though it is clear that a network-aware mechanism will provide better guarantees than a similar network-agnostic mechanism, and that a mean-aware mechanism will provide better guarantees than a similar mean-agnostic mechanism, it is not obvious what will provide a better gain in performance when introduced to an uninformed system designer: network-awareness or mean-awareness.

\begin{figure}
\centering
\includegraphics[width=0.45\textwidth]{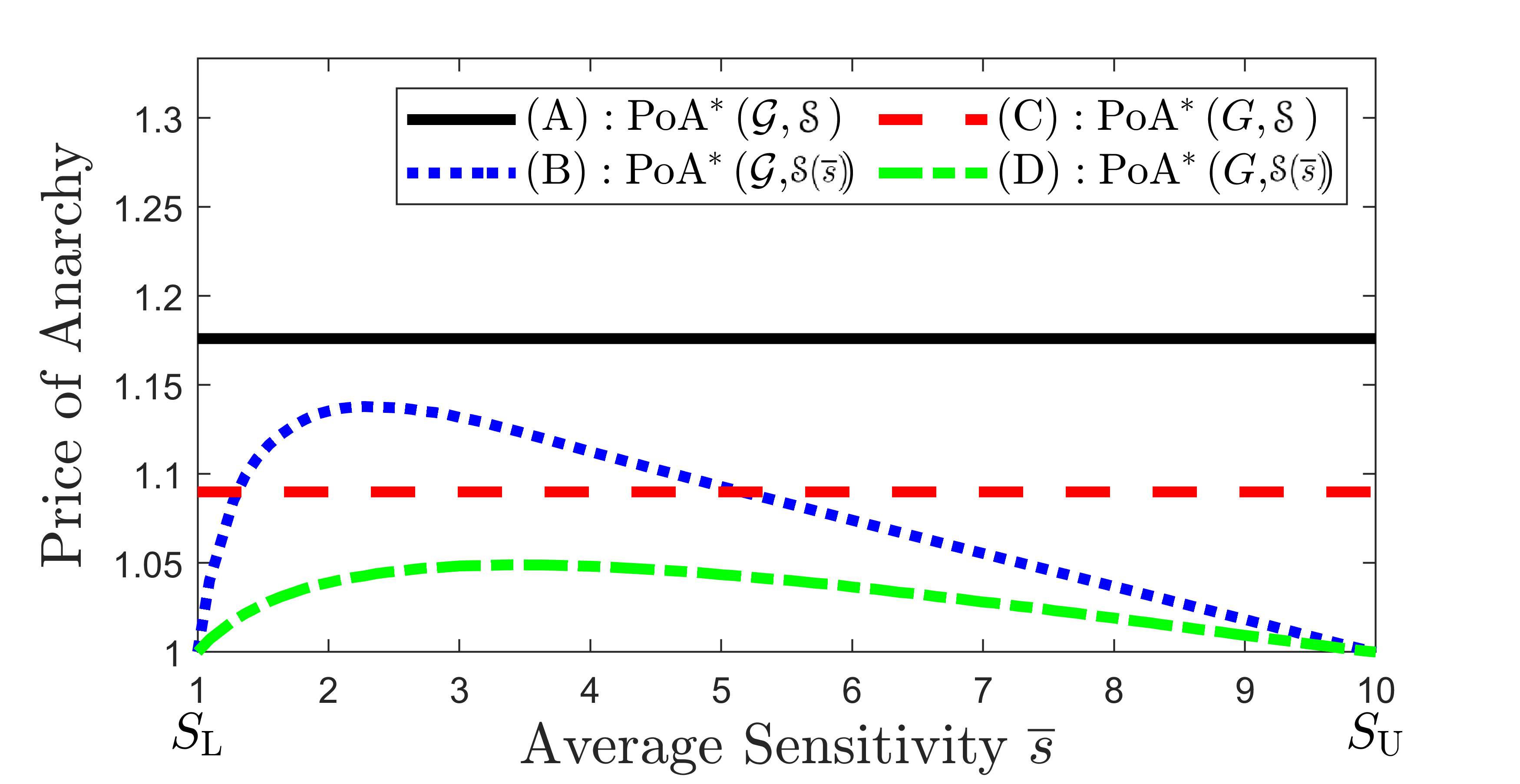}
\caption{Price of anarchy with respect to the mean sensitivity. Each plot represents a bound for one of the four introduced tolling mechanisms: (A) network agnostic, mean agnostic toll, (B) network agnostic, mean aware toll, (C) network aware, mean agnostic toll, and (D) network aware, mean aware toll. Each is below the untolled price of anarchy of 4/3. Price sensitivity bounds $\bmin=1$ and $\bmax=10$ are shown; changing these values has a minimal effect on the relation between the lines.}
\label{fig_PoA}
\end{figure}

To bring clarity to the results of this paper, we plot the price of anarchy bounds with respect to~$\sbar$ for the case $\bmin=1$ and $\bmax=10$ in Figure~\ref{fig_PoA}. As expected, the addition of information provides better guarantees on worst-case inefficiency. Notice when $\overline{s}=\bmin$ or $\overline{s}=\bmax$, the price of anarchy is 1 for the mean-aware tolls, i.e., the toll is optimal for any network. When the mean of user sensitivity is at one of the bounds, then the only feasible distribution is each user sharing that sensitivity. When all users share a common sensitivity, tolls can by applied to induce optimal routing~\cite{Beckmann1956}. However, the network-aware, mean-agnostic taxation mechanism performs better than the network-agnostic, mean-aware for means that are far from~$\bmin$ and~$\bmax$. This implies that knowing the mean sensitivity of users is not enough to guarantee better performance than knowing the structure of the network.

\section{Main Results}\label{section_main_results}
For each scenario we consider, we give a theorem reporting the price of anarchy bound, as well as a proposition reporting the scaling factor of the optimal scaled marginal-cost toll.
Note that in this paper, we focus our search for optimal tolls to a search for scaled marginal-cost tolls. These taxation mechanisms can be defined by a single scaling factor $k$, and will be denoted $\Tsmc(k)$. It is shown in Lemma~2.2 in~\cite{Brown2017d} that the search for an optimal \emph{bounded} toll is equivalent to a search for an optimal linear toll in the network-agnostic case. The form of the optimal taxation mechanism in the network aware case remains an open question.

\subsection{Network-agnostic, mean-agnostic}
We first look at the case where the system designer has the least amount of information available to them, and formulate a price of anarchy bound under a taxation mechanism that is optimal in this scenario.
\begin{theorem} \label{thm_net_agn}
When only $\bmin$ and $\bmax$ are known, the price of anarchy under an optimal, scaled marginal-cost tolling mechanism is
\begin{equation} \label{eq_poa-nomean-agn}
\poa^* \left( \gee, \sdist \right) = \frac{\left(q-1+\sqrt{q^2+14q+1}\right)^2}{8q\left(-q-1+\sqrt{q^2+14q+1}\right)},
\end{equation}
where $q:=\bmin/\bmax$.
\end{theorem}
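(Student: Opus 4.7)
The plan is to reduce the joint $\inf_T\sup_{G,s}$ problem to a tractable finite-parameter optimization, then solve it in closed form. My approach rests on three reductions. First, by the discussion preceding Section~\ref{section_main_results}, it suffices to restrict to scaled marginal-cost tolls $\Tsmc(k)$ parametrized by a single scalar $k\ge 0$, collapsing the outer infimum to a one-dimensional search. Second, I would argue that the worst-case sensitivity distribution is a two-point (``bang-bang'') distribution supported on $\{\bmin,\bmax\}$ with a single free mass parameter $\alpha\in[0,1]$; convexity in the per-user cost after fixing the Nash support makes any interior mass shiftable toward the endpoints without decreasing the PoA. Third, I would show that the worst-case two-link parallel network reduces to a one-parameter Pigou-type instance, namely one pure-linear edge and one pure-constant edge, via a homogeneity rescaling together with a perturbation argument that only improves the adversary's position.

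I would then fix $k$ and compute the Nash flow in closed form. Under $\Tsmc(k)$, a user $x$ with sensitivity $s(x)$ experiences effective latency $(1+s(x)k)\,a_e f_e + b_e$ on edge $e$, so the toll acts as a user-dependent rescaling of each slope $a_e$. On a two-link parallel network under a bang-bang distribution, Nash users self-sort by sensitivity, and the flow is pinned down by two indifference (or boundary) conditions, one per sensitivity class. This yields $\Lnash$ as an explicit rational function of $k$, $\alpha$, the remaining network parameter, and $q=\bmin/\bmax$, while $\Lopt$ for a Pigou instance is a standard closed form.

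Substituting, $\poa(\gee,\sdist,\Tsmc(k))$ becomes a rational function in the remaining free variables. I would then compute interior first-order conditions in $\alpha$ and the Pigou parameter, eliminate variables to obtain a single quadratic in $k$ whose discriminant is $q^2+14q+1$, and equalize the PoA across the dominant configurations to pin down the optimal $k$; this delivers the closed form in~\eqref{eq_poa-nomean-agn}. The main obstacle is the saddle-point structure of the inner supremum: one must verify that the interior critical point dominates every boundary configuration, in particular $\alpha\in\{0,1\}$, flow vanishing on one edge, and degenerate affine latencies where $a_e=0$ or $b_e=0$. Handling these boundary cases is precisely where the argument genuinely extends~\cite{Brown2017d} beyond its utilization-constrained setting, since such configurations are now admissible and could in principle beat the interior critical point; showing they do not is the crux of the proof.
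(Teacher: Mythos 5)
Your overall pipeline (restrict to $\Tsmc(k)$, push the adversarial distribution to the extreme points, reduce the network to a one-parameter Pigou instance, then choose $k$ to equalize the dominant worst cases) is the same skeleton the paper uses, and your anticipated discriminant $q^2+14q+1$ is exactly what falls out of the paper's balance equation~\eqref{eq:equalpoa}. However, there is a genuine error in where you place the saddle point over distributions. You propose to find an \emph{interior} first-order condition in the mass parameter $\alpha$ of a two-point distribution and to treat $\alpha\in\{0,1\}$ as boundary cases that must be shown \emph{not} to beat the interior critical point. This is backwards. Since tolls do not enter the total latency $\mathcal{L}(f)$, for a fixed network and fixed $k$ the objective $\Lnash$ depends on the distribution only through the induced flow split $\fnash_1$, and the set of Nash flow splits achievable over all of $\sdist$ is an interval whose endpoints are attained by the \emph{homogeneous} populations $s\equiv\bmin$ (maximal flow on the congestible link, under-deterred) and $s\equiv\bmax$ (minimal flow, over-deterred). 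Because $\mathcal{L}$ is a convex quadratic in $\fnash_1$, the supremum over $\sdist$ is attained at one of these two endpoints, i.e., at $\alpha\in\{0,1\}$; there is no interior maximizer in $\alpha$ to find, and your planned FOC in $\alpha$ would come up empty. (This is also why the mean-agnostic case is simpler than the mean-aware one: without the mean constraint you collapse all the way to homogeneous populations rather than to genuinely bimodal ones.)

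Once that is corrected, the rest of your plan goes through and recovers the paper's argument: for the $\bmin$-homogeneous population the worst Pigou constant yields $\tfrac{4}{4(1+k\bmin)-(1+k\bmin)^2}$, for the $\bmax$-homogeneous population it yields $\tfrac{(1+k\bmax)^2}{4k\bmax}$, the first is decreasing and the second increasing in $k$ on the relevant range, so the optimal $k$ is the unique solution of~\eqref{eq:equalpoa} on $(1/\bmax,1/\bmin)$, and substituting it gives~\eqref{eq_poa-nomean-agn}. Note that the paper itself imports the hard monotonicity and uniqueness facts from its references rather than proving them, so your more self-contained route is welcome, but you would still owe the monotonicity-in-$k$ argument that justifies equalization, and the verification that the two homogeneous worst cases are each maximized at an interior Pigou constant (your boundary checks for $a_e=0$, $b_e=0$, and vanishing flow are indeed needed there, just not for $\alpha$).
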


To find this upper bound, we find the optimal scaled marginal-cost toll. 
We then use this toll to find the worst-case game which maximizes the price of anarchy under this toll.

\begin{proposition}\label{prop_k_agn}
When only $\bmin$ and $\bmax$ are known, the optimal network-agnostic marginal-cost toll scaling factor is
\begin{equation} \label{eq_kagn}
\kagn = \frac{-\bmin -\bmax+\sqrt{\bmin^2+14\bmin\bmax + \bmax^2}}{2\bmin\bmax}. 
\end{equation}
\end{proposition}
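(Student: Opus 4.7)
The plan is to reduce the search for $\kagn$ to a one-dimensional optimization over the scalar $k \geq 0$, identify the worst-case game as a function of $k$, and then minimize the resulting worst-case $\poa$ via a balancing-condition argument.

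For a scaled marginal-cost toll $\Tsmc(k)$, the cost to user $x$ on edge $e$ becomes $(1+ks(x))\,a_e f_e + b_e$, so the induced Nash flow is identical to the un-tolled Nash flow on a network whose slopes have been rescaled user-wise. My first step would be to show that for any fixed $k$ and $G$, the worst-case sensitivity distribution $s \in \sdist$ is a homogeneous profile at one of the two extreme values $\bmin$ or $\bmax$. Since a scaled marginal-cost toll induces a threshold structure on the Nash flow (higher-sensitivity users crowd away from tolled edges), the resulting $\poa$ is extremized at the corners of $\sdist$ rather than at mixed profiles, reducing the inner maximization to a comparison of two homogeneous scenarios.

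Next, for each homogeneous sensitivity $s \in \{\bmin, \bmax\}$ I would compute, as a function of the effective scaling $\alpha := 1+ks$, the worst-case $\poa$ over two-link parallel affine networks, call it $P(\alpha)$. This function attains the value $1$ at $\alpha=2$ (the classical optimal Pigouvian scaling on a fixed-sensitivity population) and grows as $\alpha$ moves away from $2$ in either direction. Writing $\alpha_1 := 1+k\bmin$ and $\alpha_2 := 1+k\bmax$, the overall worst-case $\poa$ at a given $k$ is $\max\{P(\alpha_1),\, P(\alpha_2)\}$.

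The optimal $k$ must balance these two branches: if $P(\alpha_1) > P(\alpha_2)$, then a slight increase of $k$ moves $\alpha_1$ toward $2$ and decreases the maximum, and symmetrically if the inequality is reversed. Setting $P(\alpha_1) = P(\alpha_2)$ and using the explicit form of $P$ for affine two-link networks, I expect the balance to reduce algebraically to
\begin{equation*}
(1+k\bmin)(1+k\bmax) = 4,
\end{equation*}
which is equivalent to $\bmin\bmax\, k^2 + (\bmin+\bmax)\, k - 3 = 0$. Taking the positive root of this quadratic yields exactly the formula for $\kagn$ in~\eqref{eq_kagn}, and one checks the second-order condition by verifying that $P(\alpha_1)$ is strictly decreasing and $P(\alpha_2)$ is strictly increasing at the balance point, so the candidate is indeed a minimizer of the max.

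The main obstacle is the middle step: pinning down $P(\alpha)$ precisely enough that the balance condition collapses to the clean multiplicative form $\alpha_1\alpha_2=4$. The worst-case network likely differs between the under-tolled regime ($\alpha<2$) and the over-tolled regime ($\alpha>2$), so $P$ may only be expressible as a piecewise maximum over distinct extremal network families, and a careful case analysis of Nash-flow corner cases (both edges active versus one edge saturated, and which edge carries the larger constant term) will be required to show that the balance reduces to this particular identity rather than, e.g., $\alpha_1+\alpha_2 = \alpha_1\alpha_2$ or some other algebraic constraint.
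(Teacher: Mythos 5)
Your proposal is correct and follows essentially the same route as the paper: the paper's optimality condition \eqref{eq:equalpoa} is exactly your balance $P(1+k\bmin)=P(1+k\bmax)$ between the under-tolled worst case $4/(4\alpha-\alpha^2)$ at $s=\bmin$ and the over-tolled worst case $\alpha^2/(4(\alpha-1))$ at $s=\bmax$, with the reduction to extreme homogeneous populations and the uniqueness of the balancing $k$ on $(1/\bmax,1/\bmin)$ outsourced to prior work rather than re-derived. Your conjectured simplification $(1+k\bmin)(1+k\bmax)=4$ does check out: writing $u=1+k\bmin$, $v=1+k\bmax$, substituting $uv=4$ into the cross-multiplied form $16(v-1)=v^2u(4-u)$ of \eqref{eq:equalpoa} verifies it, the condition $uv=4$ with $u<v$ automatically places $k$ in the required interval since then $u<2<v$, and the positive root of the resulting quadratic $\bmin\bmax k^2+(\bmin+\bmax)k-3=0$ is precisely \eqref{eq_kagn}, so the ``main obstacle'' you flag (deriving $P(\alpha)$ and the extremal networks) is the same piece the paper itself resolves only by citation.
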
\vspace{1mm}
\begin{proof}
The expression~\eqref{eq_kagn} can be shown to be a solution to the equation
\begin{equation} \label{eq:equalpoa}
\frac{4}{4\left(1+\kagn\bmin\right) - \left(1+\kagn\bmin\right)^2} = \frac{\left(1+\kagn\bmax\right)^2}{4\kagn\bmax}.
\end{equation}
It is shown in~\cite{Brown2017e} that when $\bmin<\bmax$,~\eqref{eq:equalpoa} always has a exactly one solution on the interval $\left(1/\bmax,1/\bmin\right)$, and that solution is the desired optimal scale factor.
We show here that~\eqref{eq_kagn} describes this particular solution.
Let $p=\bmax/\bmin$. 
In this regime, $p>1$, so we have
\begin{align}
1+14p + p^2 	> 1+14p + p^2 + 8(1-p) \nonumber 
							= \left(p+3\right)^2.
\end{align}
Thus,
\begin{align}
\kagn 	&> \frac{-1 -p+\sqrt{\left(p+3\right)^2}}{2\bmax} = \frac{1}{\bmax}.
\end{align}

Likewise, letting $q=\bmin/\bmax$ (so that $q<1$), we have
\begin{equation*}
1+14q+q^2 < 1+14q+q^2 + 8(1-q) = (q+3)^2,
\end{equation*}
yielding
\begin{equation}
\kagn 	< \frac{-1 -q+\sqrt{\left(q+3\right)^2}}{2\bmax} = \frac{1}{\bmin}.
\end{equation}
\end{proof}

\noindent \emph{Proof of Theorem~\ref{thm_net_agn}:} Using the optimal scaling factor from Proposition~\ref{prop_k_agn}, expression~\eqref{eq_poa-nomean-agn} can be found by substituting~\eqref{eq_kagn} into~\eqref{eq:equalpoa}. $\qed$

\subsection{Network-agnostic, mean-aware}
When the mean sensitivity of users in the population is available to the system designer, the set of possible distributions is reduced to the set $\sdistm \subset \sdist$. This additional information allows the system designer to find a new optimal taxation mechanism that improves the inefficiency bound.

\begin{theorem}\label{thm_avg}
When $\bmin$, $\bmax$ and the mean sensitivity $\sbar$ are known, the price of anarchy under an optimal, scaled marginal-cost toll is given by
\begin{equation}
 \poa^* \left(\gee, \sdistm \right) = \frac{R^2-\alpha R+\alpha}{\alpha-\alpha^2/4},
\end{equation}
where $R:=(\bmax-\sbar)/(\bmax-\bmin)$, and $\alpha = (1+\bmax \ks)R$, with $\ks$ being the solution to~\eqref{eq_k_avg}. \vspace{.5mm}
\end{theorem}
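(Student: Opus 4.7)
The plan is to mirror the two-step template of Theorem~\ref{thm_net_agn}: first characterize a balance condition whose unique solution is the optimal scaling factor $\ks$, and then obtain the stated price of anarchy expression by substitution.

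First I would reduce the supremum over $\sdistm$ to a worst case over a one-parameter family. Because the Nash flow and total latency on a fixed network depend on the sensitivity distribution only through how much mass sits at each sensitivity level, and the mean constraint fixes a single moment, I would argue via an extreme-point argument on the mean-constrained feasible set that the PoA-maximizing distribution in $\sdistm$ is the bimodal one placing fraction $R=(\bmax-\sbar)/(\bmax-\bmin)$ of the population at sensitivity $\bmin$ and fraction $1-R$ at $\bmax$. This reduces the subsequent analysis to a two-type population.

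Next I would parametrize a two-edge parallel network with affine latencies, apply the scaled marginal-cost toll $\Tsmc(k)$, and solve the Wardrop (Nash) conditions separately for each of the two user types. The resulting flow typically exhibits sorting: users with sensitivity $\bmax$, who weigh tolls heavily, route differently from users with sensitivity $\bmin$, and the total mass of each type must be distributed across the two edges. With the type-by-type split pinned down, I would compute $\Lnash$ and compare it to $\Lopt$ on the same network. Maximizing the ratio over the affine latency parameters in $\gee$ yields the worst-case PoA as a function of $k$ alone. Minimizing over $k$ produces a balance condition, namely equation~\eqref{eq_k_avg}, that characterizes $\ks$ much as~\eqref{eq:equalpoa} characterizes $\kagn$ in Proposition~\ref{prop_k_agn}: two distinct worst-case sub-configurations attain the same PoA at the optimum. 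Substituting $\ks$ back into the worst-case ratio and collecting terms in $\alpha=(1+\bmax\ks)R$ would yield $(R^2-\alpha R+\alpha)/(\alpha-\alpha^2/4)$.

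The main obstacle is establishing the bimodal-distribution reduction rigorously. In the mean-agnostic setting of Theorem~\ref{thm_net_agn}, the adversary can drive each user's sensitivity independently, so two-point worst cases emerge naturally, but here the mean constraint couples the two extremes and one must verify that no interior distribution in $\sdistm$ achieves a higher PoA. A secondary difficulty is the multi-type equilibrium analysis: the sorting pattern under $\Tsmc(k)$ can flip depending on the relative sizes of $k\bmin$, $k\bmax$, and the latency coefficients, so a case analysis is needed to identify the correct worst-case topology before the optimization over $k$ can be carried out cleanly.
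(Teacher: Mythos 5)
Your high-level template (reduce to a worst-case two-type population, optimize over networks, then balance two extremal configurations to pin down $\ks$) matches the paper's strategy, and you correctly identify the bimodal reduction as the crux. But the specific reduction you propose is wrong: it is \emph{not} true that, for a fixed network and toll, the PoA-maximizing distribution in $\sdistm$ is the one placing mass $R$ at $\bmin$ and $1-R$ at $\bmax$. The paper's Lemma~\ref{lemma_bimodal} only shows the worst case is \emph{some} bimodal distribution $(S_{\rm ind},S')$ in which one type is exactly indifferent; the indifferent sensitivity is network-dependent and generically interior to $[\bmin,\bmax]$. Concretely, on a linear--constant network with $\ell_1=f_1$, $\ell_2=\gamma$ and $\gamma>(1+\bmin k)R$, the distribution $(S_1,\bmax)$ with $S_1>\bmin$ chosen so that type $S_1$ is indifferent puts mass $(\bmax-\sbar)/(\bmax-S_1)>R$ on the cheap link and yields $\fnash_1>R$; when $\fnash_1$ already exceeds $f_1^{\rm opt}=\gamma/2$ this strictly increases the PoA beyond what $(\bmin,\bmax)$ achieves. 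So an ``extreme-point argument on the mean-constrained feasible set'' cannot deliver your step 1: the map from distributions to Nash flows is not one for which extremality in $\sdistm$ implies extremality of the PoA.

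The reason the final answer nevertheless involves only the distribution $(\bmin,\bmax)$ is that extremality holds \emph{jointly} over networks and distributions: the paper first identifies, per network, the flow-maximizing and flow-minimizing distributions $\slow$ and $\supp$ (Lemma~\ref{lemma_slow_supp}), reduces to linear--constant networks (Lemma~\ref{lemma_glc}), and only then shows that the worst-case networks are precisely those $G_\alpha,G_\beta$ for which these extremal distributions coincide with $(\bmin,\bmax)$ (Lemma~\ref{lemma_Ga_Gb}). Your plan inverts this order, and the inner supremum over $s\in\sdistm$ that you discard is exactly where the work lies. You would also need the monotonicity argument of Proposition~\ref{prop_k_avg} (PoA of $G_\beta$ decreasing in $k$, of $G_\alpha$ increasing in $k$) to justify that the optimal $k$ \emph{equalizes} the two configurations rather than merely satisfying some first-order condition. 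As written, the proposal does not close the gap it correctly flags.
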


To find this bound, we perform a series of reductions in the set of distributions and networks we need consider in our search for one which realizes worst-case inefficiency. First, we show that for each network there exist two sensitivity distributions which realize the price of anarchy over all possible distributions. Then, we demonstrate a method to transform a network to one that has one linear and one constant latency function and higher price of anarchy. Finally, we prove the optimal scaling factor for a scaled marginal-cost toll equates the price of anarchy for two specific networks and show that these realize the given price of anarchy bound.

We first give a list of supporting Lemmas, then use these to complete the proof of the theorem.


We say users $x,y$ have the same \emph{type} if $s(x)=s(y)$. Further let a \emph{bimodal} distribution be one in which there exist exactly two user types; the set of such distributions is denoted  $\sdistmbi \subset \sdistm$. We denote a bimodal distribution with types $S_1$ and $S_2$ by $(S_1,S_2)$. Note that for a given $\sbar$, $S_1$ and $S_2$, the mass of users with each sensitivity is well defined. Additionally, we adopt the convention used elsewhere that the network links are indexed such that $b_1\leq b_2$.
\begin{lemma}\label{lemma_bimodal}
A Nash flow $f$ for a sensitivity distribution $s \in \sdistm$, under a scaled marginal cost tax $T$, is likewise a Nash flow for some distribution $s' \in \sdistmbi$ in which one type of user is indifferent between the two edges and all users on each edge are of a single type. This implies the price of anarchy over sensitivities in $\sdistm$ is equal to the price of anarchy over bimodal distributions in $\sdistmbi$, i.e.,\vspace{-1mm}
\begin{equation}
    \poa(\gee,\sdistm,T) = \poa(\gee,\sdistmbi,T).\vspace{2mm}
\end{equation}
\end{lemma}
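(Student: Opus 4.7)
The plan is to exploit the threshold structure that scaled marginal-cost tolls impose on Nash flows. Under $T=\Tsmc(k)$, a user of sensitivity $s$ perceives cost $(1+sk)\,a_e f_e + b_e$ on edge $e$, so the edge-preference difference is affine and monotone in $s$. Consequently, any Nash flow $(f_1,f_2)$ admits a threshold $s^\star\in\mathbb{R}$ such that, after relabeling edges, every user with $s(x)>s^\star$ sits on edge 1, every user with $s(x)<s^\star$ sits on edge 2, and users with $s(x)=s^\star$ split so as to realize $(f_1,f_2)$.

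Given this structure, the idea is to collapse each side of the threshold to a single sensitivity. I would define conditional averages $\bar s_1 := f_1^{-1}\!\int_{\{x:\,e(x)=1\}} s(x)\,dx$ and $\bar s_2$ analogously. These satisfy $\bar s_1 \ge s^\star \ge \bar s_2$, lie in $[\bmin,\bmax]$, and obey $f_1 \bar s_1 + f_2 \bar s_2 = \sbar$. I would then produce a bimodal $s'\in\sdistmbi$ whose two types sit exactly on the two edges, preserving the mean and making one type indifferent. Concretely, in the case $\sbar \ge s^\star$, set $S_2 := s^\star$ (the indifferent type) with mass $f_2$ on edge 2 and $S_1 := (\sbar-f_2 s^\star)/f_1$ with mass $f_1$ on edge 1; the mean constraint is immediate, and a short computation using $\bar s_2 \le s^\star$ shows $S_1 \in [s^\star,\bar s_1]\subseteq[\bmin,\bmax]$. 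The symmetric construction, taking $S_1:=s^\star$, handles $\sbar\le s^\star$.

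It then remains to verify that $(f_1,f_2)$ is a Nash flow for $(G,s',T)$: type $S_1\ge s^\star$ weakly prefers edge 1 by monotonicity of the preference in $s$, and type $S_2=s^\star$ is indifferent, so its mass $f_2$ may sit on edge 2. Because $\mathcal{L}(f)$ and $\Lopt(G)$ depend only on $G$ and $f$ (not on $s$), the Nash-to-optimal latency ratio is identical for $(G,s,T)$ and $(G,s',T)$. This yields $\poa(\gee,\sdistm,T) \le \poa(\gee,\sdistmbi,T)$, while the reverse inequality is immediate from $\sdistmbi\subseteq\sdistm$.

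The main obstacle I anticipate is handling the degenerate cases cleanly: Nash flows with $f_1=0$ or $f_2=0$, thresholds $s^\star$ lying outside $[\bmin,\bmax]$, and the knife-edge situation $a_1 f_1 = a_2 f_2$ in which every user has the same preference regardless of sensitivity. In each of these cases the sorting argument must be replaced by a direct construction of $s'$ (typically by collapsing to an effectively single-type distribution at $\sbar$ and placing the ``other'' type so as to trivially satisfy the indifference condition), and one must confirm the resulting $s'$ still lies in $\sdistmbi$ with the required indifference property.
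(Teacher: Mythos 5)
Your proposal is correct and follows essentially the same route as the paper's proof: exploit the monotone-in-$s$ threshold structure of Nash flows under a scaled marginal-cost toll, collapse the users on each edge to a single type, place the indifferent type at the threshold $s^\star=S_{\rm ind}$, and choose the other type to preserve the mean while staying in $[\bmin,\bmax]$. The only (cosmetic) difference is that you solve the mean constraint explicitly for the second type's sensitivity, whereas the paper pins it down by an intermediate-value argument over $S'\in(S_{\rm ind},\bmax]$; your explicit bound $S_1\in[s^\star,\bar s_1]$ and your attention to the degenerate cases ($s^\star\notin[\bmin,\bmax]$, an empty edge) are, if anything, slightly more careful than the original.
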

\begin{proof}
Let $s_1 \in \sdistm$ be some distribution of users' sensitivities, and let $S_{\rm ind}$ be the sensitivity that has equal cost between the two links in the Nash flow $\fnash$, i.e., solution to
\begin{equation} \label{eq_sind_def}
(1+S_{\rm ind}k)a_1\fnash_1 + b_1 = (1+S_{\rm ind}k) a_2\fnash_2+b_2.
\end{equation}
Note that in the case where $S_{\rm ind}>\bmax$ or $S_{\rm ind}<\bmin$, any distribution $s \in \sdist$ will have the same Nash flow with all users choosing the same edge. First, consider the case where $S_{\rm ind} < \mu(s_1)$, where $\mu(\cdot)$ is the mean of the distribution. From Claim 1.1.2 in~\cite{Brown2017c}, if a user has a sensitivity $S<S_{\rm ind}$, then they strictly prefer the first link; if they have a sensitivity $S>S_{\rm ind}$ then they strictly prefer the second.

Now, let $s_2$ be a new distribution where each user who had chosen edge 1 now has sensitivity $S_{\rm ind}$. The Nash flows from $s_1$ and $s_2$ are the same, as the same number of users have a sensitivity $S\leq S_{\rm ind}$ and thus the same users choose the first edge. It is clear that $ \mu(s_2)>\mu(s_1)$ as no user has a lower sensitivity and some have higher.

Now, consider a third distribution $s_3$, where users who chose edge 2 now have some sensitivity  $S' \in (S_{\rm ind},\bmax]$; these users will now strictly prefer the second edge of the network but the Nash flow will remain unchanged. If we pick $S'=\bmax$, the mean has surely increased again; if we pick $S' = S_{\rm ind}$, because we are in the case $S_{\rm ind} < \sbar$, the mean is lower than $\mu(s_1)$. Because $\mu(s_3)$ is continuous with $S'$, we can select $S'$ so that $\mu(s_3)=\mu(s_1)$. The case of $S_{\rm ind} > \mu(s_1)$ is similar.

The distribution $s_3 = (S_{\rm ind}, S')$ induces the same Nash flow as $s_1$ and now, one set of users is indifferent and users of the same type exist on the same edge only.
\end{proof}

Having shown in Lemma~\ref{lemma_bimodal} that the price of anarchy is realized by bimodal distributions, we further refine our search for worst-case populations to just two specific bimodal distributions with simple characterizations.

\begin{lemma}\label{lemma_slow_supp}
For a given network $G \in \gee$ and scaled marginal-cost tax $T$ with toll scaling factor $k$, two distributions $\slow$ and $\supp$, that maximize and minimize (respectively) the flow on the first edge of the network, realize the price of anarchy over those in $\sdistmbi$,
\begin{equation}
    \poa(G,\sdistmbi,T) = \poa(G,\{ \slow,\supp \},T)
\end{equation}
\end{lemma}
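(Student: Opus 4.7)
The plan is to leverage Lemma~\ref{lemma_bimodal} (which already restricts the worst case to bimodal distributions) and then exploit a structural observation specific to two-link parallel networks: every feasible flow is fully described by the single scalar $f_1 \in [0,1]$, and both $\Lnash(G,s,T)=\mathcal{L}(\fnash)$ and $\Lopt(G)$ depend on the sensitivity distribution $s$ \emph{only} through the induced Nash flow $f_1$ (in fact $\Lopt(G)$ is a constant independent of $s$). Hence maximizing $\poa(G,s,T)$ over $s\in\sdistmbi$ is equivalent to maximizing
$$
\mathcal{L}(f_1,1-f_1)=a_1 f_1^2+b_1 f_1+a_2(1-f_1)^2+b_2(1-f_1)
$$
over the set of $f_1$-values realizable by some bimodal distribution in $\sdistmbi$ under the fixed scaled marginal-cost toll $T(k)$.

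A direct computation gives $\mathcal{L}''(f_1)=2(a_1+a_2)\ge 0$, so $\mathcal{L}(f_1,1-f_1)$ is a convex quadratic in $f_1$, and its maximum over any bounded interval is attained at the interval's endpoints. The next step is therefore to characterize the set
$$
\mathcal{F}:=\{\,f_1\in[0,1]\ :\ f_1 \text{ is the edge-1 flow of }\fnash(G,s,T)\text{ for some }s\in\sdistmbi\,\},
$$
and in particular to show that $\mathcal{F}$ is a (closed) interval $[f_1^{\min},f_1^{\max}]$. Using the parameterization from the proof of Lemma~\ref{lemma_bimodal}, each element of $\mathcal{F}$ is obtained from a pair $(S_{\rm ind},S')\in[\bmin,\bmax]^2$ with mass split $m$ determined jointly by the indifference relation~\eqref{eq_sind_def} and the mean constraint $mS_{\rm ind}+(1-m)S'=\sbar$. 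Because $f_1=m$ depends continuously on $(S_{\rm ind},S')$ over the admissible set (which is connected), the image $\mathcal{F}$ is connected, i.e., an interval.

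Defining $\slow$ to be a bimodal distribution attaining $f_1^{\max}$ and $\supp$ one attaining $f_1^{\min}$, convexity of $\mathcal{L}(\cdot,1-\cdot)$ then yields
$$
\poa(G,\sdistmbi,T)=\max\bigl\{\poa(G,\slow,T),\,\poa(G,\supp,T)\bigr\}=\poa(G,\{\slow,\supp\},T),
$$
which is the claim. The main technical obstacle is the second step: rigorously showing that $\mathcal{F}$ is an interval (rather than, say, a disconnected set carved out by the mean constraint and the indifference equation), so that the convexity-of-$\mathcal{L}$ argument can be applied cleanly. Care is also needed at boundary cases where one of $S_{\rm ind},S'$ hits $\bmin$ or $\bmax$, or where the Nash flow degenerates to a single-edge flow; in such cases, the ``extreme'' distribution collapses to a unimodal one, which should be handled as a limiting case of the bimodal family.
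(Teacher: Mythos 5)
Your proposal is correct and takes essentially the same route as the paper's (much terser) proof: since $\Lopt(G)$ does not depend on $s$ and the total latency is a convex quadratic in the single scalar $f_1$, the worst case over realizable Nash flows is attained at the extreme values of $f_1$, which is exactly how the paper defines $\slow$ and $\supp$. The connectedness of your set $\mathcal{F}$, which you flag as the main technical obstacle, is in fact not needed: convexity alone gives $\mathcal{L}(f_1,1-f_1)\leq\max\{\mathcal{L}(f_1^{\min},1-f_1^{\min}),\mathcal{L}(f_1^{\max},1-f_1^{\max})\}$ for every $f_1$ in the interval hull of $\mathcal{F}$, so all that matters is that the supremum and infimum of the realizable edge-1 flows are attained by some admissible (possibly degenerate bimodal) distributions.
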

\begin{proof}
The proof follows from the fact that total latency is quadratic with the flow, thus the largest price of anarchy will come from the flow that is furthest from optimal. From Lemma~\ref{lemma_bimodal}, we see that any flow induced by a distribution $s \in \sdistm$ can be realized by a bimodal distribution that has one set of users observing equal cost between the links and each edge containing only one sensitivity type. We therefore define $\slow$ as the distribution that maximizes $\fnash_1$ and $\supp$ as the distribution which maximizes $\fnash_1$.
\end{proof}

We next reduce our search for a worst-case network to a set of graphs that have a linear latency function on the first edge and a constant latency on the second, denoted $\geelc$. 
\begin{lemma} \label{lemma_glc}
For any $G \in \gee$, there exists a $\hat{G} \in \geelc \subset \gee$ that, under the same scaled marginal cost tolling mechanism $\Tsmc(k)$, has a higher price of anarchy, implying,
\begin{equation}
    \text{PoA}(\gee,\sdistm,\Tsmc(k)) = \text{PoA}(\geelc,\sdistm,\Tsmc(k)).
\end{equation}
\end{lemma}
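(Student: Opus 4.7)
The plan is to construct $\hat{G}\in\geelc$ from a given $G\in\gee$ via two successive perturbations that each weakly raise the price of anarchy under the same toll $\Tsmc(k)$. Since $\geelc\subset\gee$ the reverse inequality is trivial, so this will establish the equality. By Lemmas~\ref{lemma_bimodal} and~\ref{lemma_slow_supp}, a worst-case distribution $s^*\in\sdistm$ for $G$ is bimodal with indifferent sensitivity $s_{\rm ind}$ and induced Nash flow $f^n=(f_1^n,f_2^n)$ satisfying the equilibrium relation $(1+s_{\rm ind}k)(a_1 f_1^n-a_2 f_2^n)=b_2-b_1$. Because $\poa(\hat{G},\sdistm,\Tsmc(k))$ is a supremum over $\sdistm$, it is enough to exhibit $\hat{G}\in\geelc$ with $\poa(\hat{G},s^*,\Tsmc(k))\geq\poa(G,s^*,\Tsmc(k))$.

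\emph{Step 1: reduce $b_1$ to zero.} I would first subtract a constant $c\in[0,b_1]$ from both $b_1$ and $b_2$. This shifts every user's cost on either edge identically; since marginal-cost tolls $\tau_e=k a_e f_e$ do not depend on $b_e$, both the Nash and the optimal flow are unchanged, while $\mathcal{L}(f)$ decreases by exactly $c$ on every feasible $f$. Using $\mathcal{L}^{\rm opt}(G)\geq b_1$ (each edge has latency at least $b_1$) together with $\poa(G,s^*,\Tsmc(k))\geq 1$, the elementary inequality $(x-c)/(y-c)\geq x/y$ for $x\geq y\geq c\geq 0$ implies that taking $c=b_1$ weakly raises the ratio. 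Hence we may assume $b_1=0$.

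\emph{Step 2: reduce $a_2$ to zero.} With $b_1=0$, define $\hat{G}=(a_1,0,0,\hat{b}_2)\in\geelc$ where $\hat{b}_2:=(1+s_{\rm ind}k)a_1 f_1^n$. The Nash condition rearranges to $\hat{b}_2=(1+s_{\rm ind}k)a_2 f_2^n+b_2$, so the indifferent type is still indifferent on $\hat{G}$ at $f^n$. For any type $s\ne s_{\rm ind}$, the edge-1-minus-edge-2 cost difference on $\hat{G}$ exceeds that on $G$ by $k(s-s_{\rm ind})a_2 f_2^n$, a quantity whose sign matches that user's existing strict preference; strict preferences are therefore preserved and $f^n$ remains the Nash flow on $(\hat{G},s^*,\Tsmc(k))$. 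Direct expansion gives $\hat{\mathcal{L}}^{\rm nf}-\mathcal{L}^{\rm nf}=s_{\rm ind}k a_2(f_2^n)^2\geq 0$, so the Nash cost weakly increases.

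The main obstacle is then to verify $\hat{\mathcal{L}}^{\rm nf}/\hat{\mathcal{L}}^{\rm opt}\geq\mathcal{L}^{\rm nf}/\mathcal{L}^{\rm opt}$. My plan is to solve the Pigou-like optimum on $\hat{G}$ in closed form ($\hat{f}_1^{\rm opt}=\hat{b}_2/(2a_1)=(1+s_{\rm ind}k)f_1^n/2$ in the interior, with boundary cases handled separately), substitute it into $\hat{\mathcal{L}}^{\rm opt}$, use the Nash relation to eliminate $b_2$, and reduce the ratio comparison to a polynomial inequality in $(a_1,a_2,f_1^n,s_{\rm ind},k)$. If direct polynomial manipulation becomes cumbersome, the fallback is to upper-bound $\hat{\mathcal{L}}^{\rm opt}$ by evaluating $\hat{\mathcal{L}}$ at the original $f^{\rm opt}$ and to combine that bound with $\hat{\mathcal{L}}^{\rm nf}\geq\mathcal{L}^{\rm nf}$. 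With the ratio inequality in hand, taking the supremum over $\sdistm$ on $\hat{G}$ yields $\poa(\hat{G},\sdistm,\Tsmc(k))\geq\poa(\hat{G},s^*,\Tsmc(k))\geq\poa(G,s^*,\Tsmc(k))=\poa(G,\sdistm,\Tsmc(k))$, completing the reduction.
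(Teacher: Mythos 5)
Your Step 1 (stripping $b_1$ from both links) is sound and is exactly the paper's first reduction. The gap is in Step 2, which is where the actual content of the lemma lives: you correctly set up the flattened network $\hat{G}$ with $\hat{b}_2=(1+s_{\rm ind}k)a_2\fnash_2+b_2$ so that the Nash flow is preserved and $\Lnash$ weakly increases, but you then explicitly defer the decisive inequality $\Lnash(\hat G)/\Lopt(\hat G)\geq\Lnash(G)/\Lopt(G)$ to an uncarried-out ``polynomial manipulation,'' and the fallback you offer does not close it. Bounding $\Lopt(\hat G)$ by $\hat{\mathcal{L}}(\fopt)$ gives $\hat{\mathcal{L}}(\fopt)-\Lopt(G)=a_2\fopt_2\bigl((1+s_{\rm ind}k)\fnash_2-\fopt_2\bigr)$, which is \emph{positive} whenever $(1+s_{\rm ind}k)\fnash_2>\fopt_2$ (e.g.\ whenever $s_{\rm ind}k>1$, a regime that genuinely occurs for $k\in(1/\bmax,1/\bmin)$ and large $s_{\rm ind}$); in that case your lower bound on the new ratio has a larger denominator and a numerator increment $s_{\rm ind}k\,a_2(\fnash_2)^2$ that need not compensate, so the fallback can certify less than the original price of anarchy. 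As written, the proof of the lemma's central claim is therefore missing.

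The paper avoids a one-shot jump to $a_2=0$ by combining two observations: (i) multiplying all latency functions by a common constant $c$ leaves both flows and the price of anarchy unchanged, and (ii) a signed perturbation of the second link --- increasing $b_2$ by $\delta>0$ when $\fnash_2>\fopt_2$, or decreasing $a_2$ by $\delta$ when $\fnash_2<\fopt_2$ --- preserves the Nash flow and raises the cost ratio, because the increments to numerator and denominator are proportional to $\fnash_2$ and $\fopt_2$ respectively and $\ell_2(\fnash_2)/\ell_2(\fopt_2)$ is closer to $1$ than $\fnash_2/\fopt_2$. Iterating (ii) and renormalizing with (i) drives the network into $\geelc$ monotonically in the price of anarchy. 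If you want to keep your one-shot construction, you must actually carry out the closed-form comparison of $\Lopt(\hat G)=\hat b_2-\hat b_2^2/(4a_1)$ (interior case) against $\Lopt(G)$ and split on the sign of $\fnash_2-\fopt_2$ as the paper does; without that case analysis the argument does not go through.
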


\begin{proof}
Consider a network $G \in \gee$ with affine latency functions on each link $\ell_i(f) = a_if+b_i$. Let $\hat{G}$ have cost functions $\hat{\ell}_i(f) = \hat{a}_if+\hat{b}_i$ with $\hat{a}_i\geq0$ and $\hat{b}_i\geq0$. 

We first show that simply removing the constant latency term on the first edge $b_1$ strictly increases the price of anarchy under any scaled marginal-cost toll.

Using the optimal and Nash flow in \eqref{eq_Nash_and_opt_flow}, if $\hat{b}_2 = b_2-b_1$ and $\hat{b}_1=0$ then $G$ and $\hat{G}$ will have the same optimal flow and Nash flow for a distribution $s$. From \eqref{eq_total_lat}, we observe that $\Lopt(G) = \Lopt(\hat{G}) +b_1$ as well as $\Lnash(G,s,k) = \Lnash(\hat{G},s,k) +b_1$; therefore,
\begin{multline*}
\text{PoA}\left(G,\sdistm, \Tsmc(k)\right)  =  \frac{\Lnash \left(\hat{G},s,\Tsmc(k)\right)+b_1}{\Lopt(\hat{G})+b_1} \\  \leq \frac{\Lnash \left(\hat{G},s,\Tsmc(k)\right)}{\Lopt(\hat{G})} = \text{PoA}(\hat{G},\sdistm,\Tsmc(k)).
\end{multline*}
Thus, for any network $G \in \gee$, there exists a network $\hat{G}$ with a linear latency function on an edge with higher price of anarchy.

Next, we show a network $G \in \gee$ will have the same price of anarchy as a network $\hat{G} \in \gee$ under the same scaled marginal cost toll if the latency functions of $\hat{G}$ equal the latency functions of $G$ times a scaling factor $c$.

Under a distribution $s \in \sdist$, $G$ and $\hat{G}$ will have the same Nash flow. Using the indifferent sensitivity $S_{\rm ind}$ that is the solution to \eqref{eq_sind_def}, the Nash flow and optimal flow on the first edge are
\begin{equation}\label{eq_Nash_and_opt_flow}
f_1^{\rm opt} = \frac{2a_2+b_2-b_1}{2(a_1+a_2)}, \quad \fnash_1 = \frac{(1+S_{\rm ind}k)a_2+b_2-b_1}{(1+S_{\rm ind}k)(a_1+a_2)}.
\end{equation}
Under the same distribution $s$, the same sensitivity $S_{\rm ind}$ will satisfy
\begin{equation}
	(1+S_{\rm ind} k) c a_1 \fnash_1 +c b_1 = (1+S_{\rm ind} k)c a_2 \fnash_2 +c b_2,
\end{equation}
which are the latency functions for the network $\hat{G}$. It is now clear that $G$ and $\hat{G}$ will have the same Nash and optimal flows. From the definition of total latency in \eqref{eq_total_lat}, the latency in $\hat{G}$ will be $c$ times the latency in $G$ under the same flow. The price of anarchy, which is the ratio of two  total latencies, will be identical in $G$ and $G'$.

Lastly, we show that by decreasing $a_2$ in a network, the price of anarchy will increase. In Lemma~\ref{lemma_bimodal}, it was shown that any feasible Nash flow can be induced by a bimodal sensitivity distribution in which users are segregated on either link by their sensitivity. The price of anarchy for the network $G$ with a Nash flow caused by $s$ will therefore be,
\begin{equation} \label{PoA_bimodal}
    \text{PoA}(G,s,\Tsmc(k)) = \frac{\ell_1(\fnash_1)\fnash_1+\ell_2(\fnash_2)\fnash_2}{\ell_1(f_1^{\rm opt})f_1^{\rm opt}+\ell_2(f_2^{\rm opt})f_2^{\rm opt}}.
\end{equation}
Let us consider the case where $\fnash_2> \fopt_2$. Now, consider a new network, $\hat{G}$ which replaces latency function $\ell_2(f) = a_2f+b_2$ in $G$ with $\hat{\ell}_2(f)= a_2f+\hat{b}_2$ where $\hat{b}_2 = b_2 + \delta$ such that $\delta > 0$. Because the users are segregated on the links, the Nash flow will not change. Note that because $\fnash_2> \fopt_2$
\begin{equation*}
    \frac{\ell_2(\fnash_2)}{\ell_2(f_2^{\rm opt})} = \frac{a_2\fnash_2+b_2}{a_2f_2^{\rm opt}+b_2} < \frac{\fnash_2}{f_2^{\rm opt}}.
\end{equation*}
It can now be shown that
\begin{align*}
\frac{\Lnash(G,s,\Tsmc(k))}{\Lopt(G)} &< \frac{\Lnash(G,s,\Tsmc(k))+\delta \fnash_2}{\Lopt(G)+\delta \fopt_2} \\ &= \frac{\Lnash(\hat{G},s,\Tsmc(k))}{\Lopt(\hat{G})}.
\end{align*}
Thus the price of anarchy has increased in the new network $\hat{G}$, under the same sensitivity distribution and scaled marginal cost toll, when $b_2$ was increased, which has the same effect as decreasing the other terms and holding $b_2$ constant. A very similar argument can be followed for when $\fnash_2 < f_2^{\rm opt}$ by picking $\hat{a_2}=a_2 - \delta$, and the price of anarchy again increases.
\end{proof}

We further reduce our search for a worst case network to two specific networks. For a given set of distributions $\sdistm$ and toll scaling factor $k$, we define two networks: 

(1)~$G_\beta~\in~\geelc$ with latency functions $\ell_1(f_1)=f_1$ and $\ell_2(f_2)=\beta$ and satisfies $s_\mathpzc{l}^{(\sbar,G_\beta,k)} = (\bmin,\bmax)$, and 

(2)~$G_\alpha~\in~\geelc$  with latency functions $\ell_1(f_1)=f_1$ and $\ell_2(f_2)=\alpha$ and satisfies $s_\mathpzc{u}^{(\sbar,G_\alpha,k)} = (\bmin,\bmax)$. Due to the discussion in the proof of Lemma~\ref{lemma_glc}, any network in $\geelc$ with cost functions satisfying $b_2/a_1 = \beta$ will have the same price of anarchy as $G_\beta$, and the same is true for for $G_\alpha$.

\begin{lemma}\label{lemma_Ga_Gb}
For linear constant networks, under sensitivity distributions in $\sdistm$ with toll scaling factor $k$, the network $G_\alpha$ or $G_\beta$ will realize the upper bound on the price of anarchy, i.e.,
\begin{align*}
\poa(\geelc,\sdistm,\Tsmc(k))
 = \poa(\{ G_\alpha, G_\beta \},\sdistm,\Tsmc(k)).
\end{align*}
\end{lemma}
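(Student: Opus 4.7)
The plan is to reduce the supremum over $\geelc$ to a one-parameter family of networks and then show that the worst-case parameter values coincide exactly with those defining $G_\alpha$ and $G_\beta$. By the scaling argument within the proof of Lemma~\ref{lemma_glc}, every network in $\geelc$ is PoA-equivalent to one of the form $\ell_1(f)=f$, $\ell_2(f)=\beta'$; denote such a network $G_{\beta'}$ and parameterize $\geelc$ by $\beta' \geq 0$. Lemma~\ref{lemma_slow_supp} then reduces the worst-case distribution for each $G_{\beta'}$ to either $\slow$ or $\supp$. It therefore suffices to show
\begin{equation*}
    \sup_{\beta' \geq 0}\; \poa(G_{\beta'}, \slow, \Tsmc(k)) = \poa(G_\beta, \slow, \Tsmc(k)),
\end{equation*}
together with the analogous equality for $\supp$ and $G_\alpha$; the lemma then follows by taking the maximum over the two branches.

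Second, I would characterize how the extremal distribution $\slow$ varies with $\beta'$. Using the indifference condition~\eqref{eq_sind_def} together with the mean constraint $\mu(s)=\sbar$, the bimodal $\slow$ that maximizes $\fnash_1$ takes the form $(S_{\rm ind}(\beta'), S_2(\beta'))$: for small $\beta'$ the indifferent type sits close to $\bmax$ and the distribution concentrates near $\sbar$, whereas for large $\beta'$ the indifferent type descends toward $\bmin$ and the two masses spread to the widest feasible bimodal $(\bmin, \bmax)$. The definition of $G_\beta$ pinpoints the precise $\beta'$ at which this boundary is first reached. For $\beta'$ beyond this critical value the extremal distribution is pinned at $(\bmin, \bmax)$, and with the distribution fixed both the Nash and optimal flows given by~\eqref{eq_Nash_and_opt_flow} adjust in a direction that shrinks the ratio $\Lnash/\Lopt$. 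An entirely symmetric analysis handles the $\supp$-branch and identifies its maximizer with $G_\alpha$.

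The main obstacle is making the monotonicity rigorous. Using~\eqref{eq_Nash_and_opt_flow}, both $\fnash_1$ and $\fopt_1$ become explicit rational functions of $\beta'$, with $S_{\rm ind}(\beta')$ determined by the mean constraint, so~\eqref{PoA_bimodal} yields the PoA as a single-variable function on each branch. I would differentiate it and show the derivative has the appropriate sign on either side of the critical $\beta'$; this is a one-variable calculus exercise, but care must be taken to track how $S_{\rm ind}$ and the corresponding mass weights vary with $\beta'$ while respecting the support constraints $\bmin$ and $\bmax$. A cleaner alternative is a direct perturbation argument: whenever $\beta'$ lies on the interior-distribution side, $\slow$ lives strictly inside $\sdistmbi$, and one can exhibit a perturbation that leaves $\fnash_1$ and $\fopt_1$ invariant while moving the distribution toward the corner $(\bmin,\bmax)$, strictly increasing PoA and contradicting extremality. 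Combining the two branches then yields the stated equality.
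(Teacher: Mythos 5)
Your two reduction steps are fine and match the paper: scaling every $G\in\geelc$ to the one-parameter family $\ell_1(f)=f$, $\ell_2(f)=\beta'$, and invoking Lemma~\ref{lemma_slow_supp} to restrict attention to $\slow$ and $\supp$. The gap is in the step that actually proves the lemma --- that $\sup_{\beta'}\poa(G_{\beta'},\slow,\Tsmc(k))$ is attained at $\beta'=\beta$ --- and both of your proposed routes for closing it rest on an incorrect picture of how $\slow$ depends on $\beta'$. On the $\slow$ branch the Nash condition forces $\fnash_1\le \beta'/(1+\bmin k)$, so the flow-maximizing distribution has the \emph{low} type indifferent at $\bmin$ for every $\beta'\le\beta$ (not an indifferent type "close to $\bmax$" for small $\beta'$); it is the high type $S_2$, determined by $\fnash_1=(S_2-\sbar)/(S_2-\bmin)$, that climbs and first reaches $\bmax$ exactly at $\beta'=\beta=(1+\bmin k)R$. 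Beyond that point the extremal distribution is \emph{not} pinned at $(\bmin,\bmax)$: the mean constraint caps the mass at $\bmin$ by $R$, so the indifferent low type must rise strictly above $\bmin$. Your one-variable calculus exercise would therefore be differentiating the wrong function on at least one side of the critical value. The fallback perturbation argument is also vacuous as stated: if the network is fixed and the perturbation "leaves $\fnash_1$ and $\fopt_1$ invariant," then both total latencies in \eqref{PoA_bimodal}, and hence the PoA, are unchanged --- nothing is strictly increased.

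For comparison, the paper avoids sweeping $\beta'$ altogether. It shows (by differentiating \eqref{eq_Nash_and_opt_flow}) that the PoA is monotone in the indifferent sensitivity $S_{\rm ind}$ --- decreasing when $\fnash_1>\fopt_1$ and increasing when $\fnash_1<\fopt_1$ --- so a worst case must have $\slone=\bmin$ or $\sutwo=\bmax$. It then uses $\fnash_1=(S_2-\sbar)/(S_2-S_1)$ to conclude that, with the indifferent type at a bound, the flow deviation (and hence, by the quadratic dependence of total latency on flow, the PoA) is maximized when the other type sits at the opposite bound, i.e.\ when the extremal distribution is exactly $(\bmin,\bmax)$; the networks for which this occurs are by definition $G_\beta$ and $G_\alpha$. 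If you want to keep your $\beta'$-sweep, you would need to correctly derive $\slow(\beta')$ and $\supp(\beta')$ on both sides of the critical values and verify the sign of the derivative of \eqref{PoA_bimodal} in each regime; the paper's necessary-condition argument gets there with less bookkeeping.
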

\begin{proof}
It can be seen by differentiation of \eqref{eq_Nash_and_opt_flow}, the price of anarchy increases with the value of the indifferent sensitivity when $\fnash_1 < f_1^{\rm opt}$ and decreases when $\fnash_1 < f_1^{\rm opt}$. Recall that $\slow$ has $f_{1\mathpzc{l}} > f_1^{\rm opt}$ and indifferent sensitivity $S_{\mathpzc{l}1}$; similarly, $\supp$ has $f_{1\mathpzc{u}} < f_1^{\rm opt}$ and indifferent sensitivity $S_{\mathpzc{u}2}$. It is therefore true that having $S_{\mathpzc{l}1}=\bmin$ or $S_{\mathpzc{u}2}=\bmax$ is a necessary condition for the network which maximizes the price of anarchy. 

Further, in bimodal distributions $(S_1,S_2)$ where users are homogeneous on either link, $\fnash_1 = (S_2-\sbar)/(S_2-S_1)$. For $\slow$ when users with sensitivity $S_{\mathpzc{l}1}=\bmin$ are indifferent, the largest flow that can occur on $f_1$ occurs when $\slow = (\bmin,\bmax)$. Similarly, for $\supp$, when users with sensitivity $S_{\mathpzc{u}2}=\bmax$ are indifferent, the least flow in $f_1$ has $\supp = (\bmin,\bmax)$. One of these two conditions must be met by a network $G\in \geelc$ to maximize the price of anarchy. Those networks are the defined $G_\alpha$ and $G_\beta$.
\end{proof}

\begin{proposition}\label{prop_k_avg}
When $\bmin$, $\bmax$ and the mean sensitivity $\sbar$ are known, the optimal network-agnostic marginal-cost toll scaling factor $\ks$ will be the solution on $(1/\bmax,1/\bmin)$ to
\begin{align}\label{eq_k_avg}
&\frac{4(1+\bmax k-   \bmax k R)}{(4+R)(1+\bmax k)+(\bmax k + \bmax^2 k^2)R} \nonumber \\ 
&\qquad= \frac{4(1+\bmin k- \bmin k R)}{(4+R)(1+\bmin k)+(\bmin k + \bmin^2 k^2)R},
\end{align}
\end{proposition}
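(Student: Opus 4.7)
The plan is to invoke Lemma \ref{lemma_Ga_Gb}, which reduces the worst-case price of anarchy under a scaled marginal-cost toll $\Tsmc(k)$ to the two canonical networks $G_\alpha$ and $G_\beta$ evaluated on their designated bimodal distributions. Consequently, finding $\ks$ amounts to minimizing $\max\bigl\{\poa(G_\alpha,\sdistm,\Tsmc(k)),\;\poa(G_\beta,\sdistm,\Tsmc(k))\bigr\}$ over admissible $k$, and the proposition claims this minimum is attained at the balance point where the two PoA values coincide.

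First I would derive closed-form expressions for each of the two PoAs as functions of $k$. Using the explicit latency functions of $G_\alpha$ (namely $a_1=1$, $b_1=0$, $a_2=0$, $b_2=\alpha$) and $G_\beta$ (analogous with $\beta$), plugging in the corresponding indifferent sensitivities ($\bmax$ for $G_\alpha$, $\bmin$ for $G_\beta$ as dictated by Lemma \ref{lemma_Ga_Gb}), and using the bimodal flow identity $\fnash_1=R$ in the flow formulas \eqref{eq_Nash_and_opt_flow} and the bimodal PoA expression \eqref{PoA_bimodal}, I expect to obtain precisely the two rational functions of $k$ that appear on the left-hand and right-hand sides of \eqref{eq_k_avg}. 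Equating them then yields \eqref{eq_k_avg} as the equation characterizing the balance point.

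Next I would argue that this balance is indeed where the minimax is achieved. On $G_\beta$, the Nash flow overloads link 1 relative to $\fopt$, so raising $k$ pushes flow off link 1 and strictly decreases PoA; on $G_\alpha$ the reverse holds, so PoA is strictly increasing in $k$. Combined with opposite orderings at the endpoints $k=1/\bmax$ and $k=1/\bmin$, this produces a unique crossing in $(1/\bmax,1/\bmin)$, and quasi-convexity of the pointwise maximum identifies the crossing as the unique minimizer. Any $k$ on one side of the crossing is strictly dominated by a small perturbation toward it, since only one of the two PoAs is binding.

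The main obstacle will be making the monotonicity claims rigorous: each PoA expression involves $k$ in both numerator and denominator in a nontrivial way, so signing a derivative by brute force is messy. I would circumvent this by using the structural observation that $\mathcal{L}(f)/\Lopt$ is strictly quasi-convex about $f=\fopt$ together with the fact, visible from \eqref{eq_Nash_and_opt_flow}, that increasing $k$ moves $\fnash_1$ monotonically toward $f_1^{\rm opt}$ on $G_\beta$ and away from it on $G_\alpha$. This reduces each monotonicity claim to a one-line statement about how the Nash flow responds to $k$. Verifying the boundary behavior on $(1/\bmax,1/\bmin)$ — showing that the two PoA values strictly swap which is larger across the interval — then guarantees the crossing lies strictly inside, so that $\ks$ is a well-defined interior solution of \eqref{eq_k_avg}.
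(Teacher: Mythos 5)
Your overall strategy --- reduce via Lemma~\ref{lemma_Ga_Gb} to the two networks $G_\alpha$ and $G_\beta$, show their prices of anarchy are respectively increasing and decreasing in $k$, and conclude that the minimax is attained at the crossing point characterized by \eqref{eq_k_avg} --- is exactly the paper's argument, including the derivation of the two sides of \eqref{eq_k_avg} from the indifference conditions $\beta=(1+\bmin k)R$ and $\alpha=(1+\bmax k)R$ substituted into the ratio \eqref{eq_network_ratio}.

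The one place your plan would run into trouble is the proposed rigorization of the monotonicity step. You want to argue that increasing $k$ moves $\fnash_1$ toward $f_1^{\rm opt}$ on $G_\beta$ and away from it on $G_\alpha$, and then invoke quasi-convexity of $\mathcal{L}(f)/\Lopt(G)$ about $\fopt$. But $G_\alpha$ and $G_\beta$ are themselves $k$-dependent objects: they are defined by the requirement that the extremal bimodal distribution be $(\bmin,\bmax)$, which pins the Nash flow at $\fnash_1=R$ for \emph{every} $k$ --- a fact you yourself use when deriving the closed forms. What varies with $k$ is the network parameter $\gamma\in\{\alpha,\beta\}$ and hence $f_1^{\rm opt}=\gamma/2$, not $\fnash_1$. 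The picture in which raising $k$ shifts the Nash flow on a \emph{fixed} network is a different object, and on a fixed network the worst-case distribution in $\sdistmbi$ also moves with $k$, so the one-line quasi-convexity argument does not apply as stated. The clean fix is the paper's: note that $\gamma\mapsto (R^2-\gamma R+\gamma)/(\gamma-\gamma^2/4)$ is unimodal with minimum at $\gamma=2R$, that $\beta=(1+\bmin k)R<2R$ and $\alpha=(1+\bmax k)R>2R$ for all $k\in(1/\bmax,1/\bmin)$, and that both are increasing in $k$; composing these facts yields the two monotonicities directly, after which your crossing/minimax argument goes through unchanged.
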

\vspace{0.1in}
\begin{proof}
The $k$ that solves \eqref{eq_k_avg} equates the price of anarchy for $G_\beta$ and $G_\alpha$. To show this is optimal, it is sufficient to show that the price of anarchy for the network $G_\beta$ is monotone decreasing with $k$ while the price of anarchy for the network $G_\alpha$ is monotone increasing with $k$.
If the networks have this monotonic relation with $k$, then the $k$ that minimizes the price of anarchy  must equalize them.

Consider a network $G \in \geelc$ characterized by $\gamma = b_2/a_1 $. If this satisfies that $\slow =(\bmin,\bmax)$ or $\supp = (\bmin,\bmax)$, then the price of anarchy for this network will be
\begin{equation}\label{eq_network_ratio}
 \poa(G,\sdistm,\Tsmc) = \frac{R^2-\gamma R+\gamma}{\gamma-\gamma^2/4}.
\end{equation}
This expression is locally minimized by $\gamma = 2R$, further, by differentiation, it can be observed that it is monotone decreasing for $0 < \gamma < 2R$ and monotone increasing for $\gamma > 2R$.

For the previously defined network $G_\beta$, under the bimodal distribution $(\bmin,\bmax)$,
\begin{equation}\label{eq_beta_equal}
	(1+\bmin k)R = \beta,
\end{equation}
where $\beta$ is dependent on the scaling factor $k$. From \cite{Brown2017e}, the optimal scaling factor $k$ will be in $(1/\bmax,1/\bmin)$. Therefore, for any $k$, $\beta < 2R$. The price of anarchy for this network is therefore monotone decreasing with $\beta$, and from \eqref{eq_beta_equal}, $\beta$ is clearly increasing with $k$. The price of anarchy of the network is therefore decreasing with $k$. Similarly for $G_\alpha$, under the distribution $(\bmin,\bmax)$,
\begin{equation}\label{eq_alpha_equals}
	(1+\bmax k)R = \alpha > 2R,
\end{equation}
and the price of anarchy will be increasing with $k$.
\end{proof}

\noindent \emph{Proof of Theorem~\ref{thm_avg}:}
From Lemma~\ref{lemma_Ga_Gb}, a network $G_\alpha$ realizes the price of anarchy when the toll scaling factor is chosen optimally as in Proposition~\ref{prop_k_avg}. The price of anarchy for this network is found by substituting $\alpha$ from \eqref{eq_alpha_equals} into the latency function ratio in \eqref{eq_network_ratio}.
$\qed$

\subsection{Network-aware, mean-agnostic}
If the system designer is made aware of the structure of the network, then the optimal network-aware taxation mechanism will out-perform one that is designed without such knowledge. Though it is still an open question as to what tolls are optimal for a network-aware system designer in this scenario, we opt to still consider linear tolling functions to provide a means of comparing with the previously introduced, network-agnostic taxation mechanisms.
\begin{theorem}\label{thm_net}
When only $\bmin$ and $\bmax$ are known, the price of anarchy under an optimal network-aware scaled marginal-cost toll is tightly upper bounded by
\begin{equation} \label{eq_poa-nomean-awa}
 \poa^* \left(G,\sdist \right) \leq \frac{4}{3}\left(1-\frac{\sqrt{q}}{\left(1+\sqrt{q}\right)^2}\right)
\end{equation}
where $q:=\bmin/\bmax$.
\end{theorem}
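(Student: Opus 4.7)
The plan is to reduce to a single-parameter family of linear–constant networks, exhibit a particular scaled marginal-cost toll, and then optimize over that parameter.

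First, I would carry the reductions of Section~\ref{section_main_results} through to the mean-agnostic regime. The proof of Lemma~\ref{lemma_glc} never invokes the mean constraint: stripping off $b_1$, rescaling latencies, and driving $a_2$ downward all increase the PoA under any scaled marginal-cost toll, independent of the sensitivity distribution. So without loss of generality I may assume $\ell_1(f)=f$ and $\ell_2(f)=\gamma$ for some $\gamma\geq 0$, whence $f_1^{\rm opt}=\min\{\gamma/2,1\}$. An argument analogous to Lemma~\ref{lemma_bimodal}--\ref{lemma_slow_supp} collapses even further in the mean-agnostic setting, since over $\sdist$ any $S\in[\bmin,\bmax]$ can serve as the indifferent sensitivity via a homogeneous distribution; consequently the extremal Nash flows on edge~1 are $f_L:=\min\{1,\gamma/(1+k\bmin)\}$ and $f_U:=\min\{1,\gamma/(1+k\bmax)\}$, and one of these two homogeneous configurations realizes the worst-case PoA.

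Second, I would propose the scaling $k=1/\sqrt{\bmin\bmax}$. A short calculation using the identity $1/(1+k\bmin)+1/(1+k\bmax)=1$ shows that $f_L+f_U=\gamma$, so the two extremal Nash flows straddle $f_1^{\rm opt}=\gamma/2$ symmetrically whenever both are interior. Because the ratio $(f^2-\gamma f+\gamma)/(\gamma-\gamma^2/4)$ is a convex quadratic in $f$ minimized at $f=\gamma/2$, this symmetry equates the PoA at the two extremes. Substituting $f=\gamma/(1+\sqrt{q})$ into the ratio and simplifying yields the closed form $(1-c\gamma)/(1-\gamma/4)$ with $c:=\sqrt{q}/(1+\sqrt{q})^2$.

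Third, I would maximize the above expression over $\gamma$. Its derivative with respect to $\gamma$ has the sign of $1/4-c$, which is nonnegative by the AM-GM inequality $\sqrt{q}\leq(1+\sqrt{q})^2/4$, so the function is nondecreasing in $\gamma$. Within the interior-Nash regime the worst-case $\gamma$ is therefore the largest value for which both $f_L$ and $f_U$ remain in $(0,1)$. For $\gamma$ past the threshold where the untolled Nash saturates at $f_1=1$, one instead deploys a vanishing $k$ (even $k=0$), which recovers PoA $=1$ because the Nash flow coincides with the boundary optimum; the envelope of optimal-$k$ PoA over $\gamma$ therefore peaks at $\gamma=1$, yielding the claimed bound $\frac{4}{3}(1-c)=\frac{4}{3}\bigl(1-\sqrt{q}/(1+\sqrt{q})^2\bigr)$.

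The main obstacle is the third step: one must dovetail two distinct toll prescriptions—the balanced $k=1/\sqrt{\bmin\bmax}$ in the interior-Nash regime and a small $k$ in the boundary-Nash regime—and confirm that their envelope of worst-case PoAs is globally maximized at $\gamma=1$. The delicate sub-calculation is to rule out intermediate values of $\gamma$ where neither toll prescription is obviously optimal, by showing that the optimal toll there still yields PoA no larger than $(4/3)(1-c)$; this is the case analysis that pins down the stated upper bound.
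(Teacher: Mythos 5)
Your reductions and the interior-regime analysis are correct and essentially match the paper's mechanism: your toll $k=1/\sqrt{\bmin\bmax}$ is exactly the paper's $k^{\rm gm}$, the identity $1/(1+k\bmin)+1/(1+k\bmax)=1$ is the right way to see that the two extreme homogeneous Nash flows straddle $f_1^{\rm opt}=\gamma/2$ symmetrically, and the closed form $(1-c\gamma)/(1-\gamma/4)$ with its monotonicity in $\gamma$ is a clean, explicit derivation of the bound on the sub-family of linear--constant networks whose Nash flows stay interior (the paper delegates this case to~\cite{Brown2017c}).

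The gap is precisely the step you flag, and your sketched resolution of it is wrong. For $\gamma\in(1,2)$ the untolled Nash flow saturates at $f_1=1$ while the optimum is still interior at $f_1^{\rm opt}=\gamma/2<1$, so deploying $k=0$ gives $\poa=4/(\gamma(4-\gamma))>1$, not $\poa=1$; the Nash flow coincides with the optimum only once $\gamma\geq 2$. Moreover, the interior-Nash regime for $k^{\rm gm}$ extends up to $\gamma<1+\sqrt q$ (where $f_L=\gamma/(1+\sqrt q)$ hits $1$), not up to $\gamma=1$, and since $(1-c\gamma)/(1-\gamma/4)$ is increasing in $\gamma$, on $\gamma\in(1,1+\sqrt q)$ it strictly exceeds its value at $\gamma=1$, i.e.\ exceeds $\tfrac{4}{3}(1-c)$. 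Concretely, take $\bmin=1$, $\bmax=4$, $\gamma=1.4$: the symmetric straddle gives worst-case flows $f_1\in\{14/15,\,7/15\}$ and $\poa\approx 1.060$, while the stated bound is $28/27\approx 1.037$, and one can check that no other scaling factor does better for this network. So the envelope does not peak at $\gamma=1$, and the ``delicate sub-calculation'' you defer cannot close as sketched. The value at $\gamma=1$ is the tight answer only under the utilization constraint of~\cite{Brown2017c} (every edge carries flow in the untolled Nash flow, which after your normalization forces $\gamma\leq 1$); without that constraint you must confront the regime $\gamma\in(1,1+\sqrt q)$ head-on, which is exactly the regime the paper's own proof addresses via its case split on $\fnash_2(G,\bmin,k^{\rm gm})$ and its appeal to~\cite{Brown2017c} --- an appeal that, for the same reason, deserves scrutiny in your write-up rather than being taken for granted.
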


For the proof of this bound, we implicitly assume that the system operator can determine hypothetical homogeneous low-sensitivity Nash flows associated with each tolling factor; this is reasonable since the Nash flow for a homogeneous population is known to be the solution to a convex optimization problem~\cite{Roughgarden2002}.
In the following, we write $\fnash_i(G,S,k)$ to denote the amount of traffic on link $i$ in a Nash flow on network $G$ with homogeneous sensitivity $S$ and toll scale factor $k$.

\begin{proposition}
For any network~$G\in\gee$ and any~$\bmax\geq\bmin>0$, let $k^{\rm gm}=\left(\bmin\bmax\right)^{-1/2}$. 
The following is an optimal network-aware marginal-cost toll scaling factor:
\begin{equation}
    \kg = \left\{
    \begin{array}{cl}
    0 & \mbox{ if } \fnash_2(G,\bmin,k^{\rm gm})=0, \\
    k^{\rm gm} & \mbox{ otherwise.}
    \end{array}\right.
\end{equation}\vspace{.1mm}
\end{proposition}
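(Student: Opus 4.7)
The plan is to reduce the search over $\sdist$ to comparing two homogeneous-population Nash flows, exploit a symmetry of the Nash-vs-optimal deviation to pinpoint $k^{\rm gm}$ in the non-degenerate case, and handle the degenerate case by a separate boundary argument.

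First I would adapt Lemmas~\ref{lemma_bimodal} and \ref{lemma_slow_supp} (whose reductions never invoke the mean constraint) to show that for any fixed network~$G$ and toll scale~$k$ the worst-case sensitivity distribution over $\sdist$ is attained by a bimodal $(\bmin,\bmax)$ distribution in which one type is indifferent between the two edges. A direct check on the indifference equation reveals that when type~$s$ is indifferent the non-indifferent type is strictly segregated onto a single edge and the Nash flow on edge~1 reduces to the homogeneous-$s$ Nash flow $\fnash_1(G,s,k)$. Thus for each $(G,k)$ the worst-case PoA equals the maximum of $\mathcal{L}(\fnash(G,\bmin,k))/\mathcal{L}(\fopt)$ and $\mathcal{L}(\fnash(G,\bmax,k))/\mathcal{L}(\fopt)$.

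In the non-degenerate case $\fnash_2(G,\bmin,k^{\rm gm})>0$, both homogeneous Nash flows split traffic across both edges in a neighborhood of $k^{\rm gm}$, and a short computation gives the signed deviation
\begin{equation*}
\fnash_1(G,s,k)-\fopt_1=\frac{(b_2-b_1)(1-sk)}{2(1+sk)(a_1+a_2)}.
\end{equation*}
Because $\mathcal{L}$ is a convex quadratic in $f_1$ symmetric about $\fopt_1$, the PoA is a strictly increasing function of $|\fnash_1-\fopt_1|$. Equating these magnitudes at $s=\bmin$ and $s=\bmax$ simplifies to $(\bmin k)(\bmax k)=1$, whose unique positive solution is $k=k^{\rm gm}$. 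Moreover, the magnitude at $s=\bmin$ is strictly decreasing in $k$ on $(0,1/\bmin)$ while that at $s=\bmax$ is strictly increasing in $k$ on $(1/\bmax,\infty)$, so any perturbation away from $k^{\rm gm}$ inflates at least one of the two magnitudes and therefore the worst case, establishing optimality.

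Finally, in the degenerate case the homogeneous-$\bmin$ Nash flow equals $(1,0)$ throughout $[0,\tau_1]$, where $\tau_1:=((b_2-b_1)/a_1-1)/\bmin\geq k^{\rm gm}$, so its PoA is constantly $A:=\mathcal{L}(1,0)/\mathcal{L}(\fopt)$. At $k=0$ every sensitivity distribution produces this same flow, so the worst-case PoA equals exactly $A$, while for any $k\in(0,\tau_1]$ the $\bmin$-homogeneous distribution still forces the worst case to be at least $A$. For $k>\tau_1$, a computation on the $\bmax$-homogeneous deviation magnitude at $k=\tau_1$, using $\bmax\tau_1\geq 1/\sqrt{q}>1$, shows that magnitude strictly exceeds $1-\fopt_1$; combined with monotonicity of that magnitude for $k>1/\bmax$, this keeps the worst case above $A$ on $(\tau_1,\infty)$ as well. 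Hence $\kg=0$ attains the minimum worst-case PoA. The main obstacle in turning this plan into a rigorous proof is the degenerate-case bookkeeping, specifically verifying that as $\fnash(G,\bmax,k)$ passes through $\fopt_1$ there is no value of $k$ at which both the $\bmin$- and $\bmax$-indexed PoA curves simultaneously fall below $A$.
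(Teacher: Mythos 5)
Your proposal is correct and structurally identical to the paper's proof: the same case split on whether $\fnash_2(G,\bmin,k^{\rm gm})=0$, the same balancing of the $\bmin$- and $\bmax$-homogeneous worst cases at $k^{\rm gm}$ in the non-degenerate case, and the same "cannot help $\bmin$ without hurting $\bmax$" logic in the degenerate case --- the only difference is that you reconstruct explicitly (via the signed-deviation formula and the identity $\bmin\bmax k^2=1$) what the paper simply cites from its references. The one worry you flag yourself does resolve favorably: at $k=\tau_1$ the high-sensitivity deviation magnitude already weakly dominates $1-\fopt_1$ and is increasing thereafter, so the two PoA curves never simultaneously fall below $A$.
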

\begin{proof}
Consider the following cases, differentiated by the structure of Nash flows resulting from $k=k^{\rm gm}:=(\bmin\bmax)^{-1/2}$:
\begin{enumerate}
\item $\fnash_2(G,\bmin,k^{\rm gm})>0$ , and
\item $\fnash_2(G,\bmin,k^{\rm gm})=0$.
\end{enumerate}

It is shown in~\cite{Brown2017c} that in Case (1), it must be true that $\Lnash\left(G,\bmin,k\right) = \Lnash\left(G,\bmax,k\right)$ and that this choice of $k$ is uniquely optimal, resulting in the price of anarchy given in~\eqref{eq_poa-nomean-awa}.

Consider Case (2).
Here, the extreme low-sensitivity population with $s=\bmin$ strictly prefers link $1$ when $k=k^{\rm gm}$, effectively stripping the designer of her influence over the price of anarchy.
It can easily be shown (using, e.g., tools from~\cite{Brown2017c}) that 
\begin{equation}
k\rq{}\leq k^{\rm gm}\ \implies \ \Lnash\left(G,\bmin,k\right)=\Lnash\left(G,\emptyset\right), \label{eq:cantchangebmin}
\end{equation}
but that 
\begin{equation}
k^\dagger>k^{\rm gm} \ \implies \ \Lnash\left(G,\bmax,k^\dagger\right)>\Lnash\left(G,\emptyset\right).
\end{equation}
That is, in this regime, the designer cannot change the behavior of $s=\bmin$ without increasing tolls, but cannot increase tolls because this would cause the high-sensitivity population with $s=\bmax$ to route more inefficiently.
That is, $k=0$ is an optimal tolling coefficient in this case.%
\footnote{%
In this case, the set of price-of-anarchy-minimizing tolling coefficients is not a singleton in general: any coefficient satisfying $\Lnash(G,\bmin,k)\geq\Lnash(G,\bmax,k)$ is optimal.
Implication~\eqref{eq:cantchangebmin} means that this set always contains $k=0$.}
\end{proof}

\noindent \emph{Proof of Theorem~\ref{thm_net}} It follows easily from the results in~\cite{Brown2017c} that in Case (2) when $k\leq k^{\rm gm}$, it is true for any $s$ that $\Lnash(G,s,k)\leq\Lnash(G,\bmax,k^{\rm gm})$; the price of anarchy bound for this scenario is thus precisely that in~\cite{Brown2017c}, where now we include games which need not have flow on every edge in an untolled Nash flow. $\qed$

\subsection{Network-aware, mean-aware}
Finally, we consider the network-aware, mean-aware scaled marginal-cost toll to illustrate the gain in performance when both network and population information is available.
\begin{theorem} \label{thm_all}
When $\bmin$, $\bmax$ and the mean sensitivity $\sbar$ are known, under an optimal network-aware scaled marginal-cost tolling mechanism, the price of anarchy is tightly upper bounded by
\begin{equation}
	\poa^* \left(G,\sdistm \right) \leq \frac{R^2-\beta R+\beta}{\beta-\beta^2/4},
\end{equation}
where $R = (\bmax-\sbar)/(\bmax-\bmin)$ and $\beta$ is the unique solution on the interval $[0,2]$ to
\begin{equation}\label{eq_bawa}
    \beta = R\left( 1+\sqrt{\frac{1+R-\beta}{\sbar/\bmin+R-\beta}} \right).
\end{equation} \vspace{1mm}
\end{theorem}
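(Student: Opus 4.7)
The plan is to apply the same sequence of reductions used in the proof of Theorem~\ref{thm_avg}, and then exploit the designer's network-awareness in the final optimization step. First I would invoke Lemmas~\ref{lemma_bimodal} and~\ref{lemma_slow_supp} to reduce the worst case over $\sdistm$ to the two extreme bimodal distributions $\slow$ and $\supp$, and then Lemma~\ref{lemma_glc} to reduce the worst-case network to a linear-constant network in $\geelc$ characterized by the single parameter $\gamma := b_2/a_1$. In this reduced setting, the PoA expression from \eqref{eq_network_ratio}, namely $(R^2-\gamma R+\gamma)/(\gamma - \gamma^2/4)$, serves as a ceiling that is attained only when one of $\slow$ or $\supp$ coincides with the extremal bimodal distribution $(\bmin,\bmax)$.

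Crucially different from the network-agnostic setting, the designer now tailors the toll scaling factor $\ksg$ to the specific $\gamma$ of each network. I would examine the indifference condition for $\slow$ with $S_1=\bmin$ indifferent, namely $(1+\bmin k)(S_2 - \sbar)/(S_2-\bmin) = \gamma$, and show that the designer can choose $\ksg$ large enough so that the companion type $S_2^*$ is strictly less than $\bmax$. This shifts the Nash flow on edge~$1$ strictly closer to $f_1^{\rm opt}$ compared with the extremal case, yielding a PoA strictly below the ceiling \eqref{eq_network_ratio}. The bound on $\poa^*(G,\sdistm)$ therefore need not equal the network-agnostic bound; instead, it depends on how far the designer can push $\ksg$ before the opposite extreme $\supp$ begins to dominate.

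I would then identify the worst-case network $\gamma = \beta$ by balancing the two competing monotonic effects analogous to those in Proposition~\ref{prop_k_avg}: as $k$ increases from $0$, the PoA contribution of $\slow$ decreases while the contribution of $\supp$ increases. At the optimum $\ksg$ for the worst-case network, these two contributions coincide; eliminating $S_2^*$ via the mass-indifference relation $(1+\bmin k)(S_2^* - \sbar) = \beta(S_2^* - \bmin)$ and setting the resulting PoA equal to $(R^2-\beta R+\beta)/(\beta - \beta^2/4)$ produces the implicit equation \eqref{eq_bawa} for $\beta$ after algebraic simplification (the ratio $\sbar/\bmin$ enters precisely through $S_2^*$ and this indifference relation). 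Substituting this $\beta$ into \eqref{eq_network_ratio} delivers the stated upper bound, with tightness witnessed by the linear-constant network parameterized by $\beta$ together with the corresponding $\ksg$.

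The main obstacle will be the final step: deriving the implicit equation from the balancing condition and verifying that \eqref{eq_bawa} admits a unique solution on $[0,2]$. The monotonicity arguments underlying Proposition~\ref{prop_k_avg} must be carefully re-examined, because here one of $\slow$ or $\supp$ is off-extreme, and its PoA contribution acquires a more intricate dependence on $\ksg$ through the implicit indifference equation rather than through the simple affine relations \eqref{eq_beta_equal} and \eqref{eq_alpha_equals}.
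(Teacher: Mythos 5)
Your high-level plan matches the paper's: reuse the reductions from Theorem~\ref{thm_avg} (bimodal distributions, the two extreme populations $\slow$ and $\supp$, and linear--constant networks), then exploit the fact that a network-aware designer may tune $k$ to each $\gamma$, and finally characterize the worst-case $\gamma=\beta$ by an implicit equation. However, there is a genuine gap at exactly the step you flag as the ``main obstacle'': you never supply the closed form of the per-network optimal scaling factor. The paper's Proposition~\ref{prop_opt_awa} shows, by reducing $\sdistm$ to an unconstrained family supported on $[\slone,\sutwo]$ and importing the result of~\cite{Brown2017c}, that $\ksg=1/\sqrt{\slone\sutwo}$. This geometric-mean form is precisely where the square root in \eqref{eq_bawa} comes from: combining $\beta=(1+\bmin\ksg)R$ with $\ksg=1/\sqrt{\bmin\,\sutwo}$ and the expression \eqref{s_2_p_2} for $\sutwo$ yields \eqref{eq_bawa} directly. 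Your ``balancing'' condition is the correct principle for why this $k$ is optimal, but stating the principle is not the same as solving it; without the explicit $1/\sqrt{\slone\sutwo}$, the ``algebraic simplification'' you invoke cannot produce \eqref{eq_bawa}.

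Two further issues. First, you have the roles of the two extreme populations partly reversed: at the worst-case network it is $\slow$ that equals $(\bmin,\bmax)$ exactly, so the ceiling \eqref{eq_network_ratio} is attained at $f_1=R$ rather than strictly undercut as your second paragraph suggests, while it is $\supp$ whose indifferent type $\sutwo$ falls strictly below $\bmax$ and satisfies \eqref{s_2_p_2}; your indifference relation is written for the wrong extreme distribution. Second, two supporting steps are missing: (i) the reduction to $\geelc$ must be shown to survive re-optimizing the toll for the transformed network (Corollary~\ref{cor_Tsg} in the paper), and (ii) one must rule out the competing candidate $G_\alpha$ (the network with $\supp=(\bmin,\bmax)$) as the worst case, which the paper does by showing that the flow $f_1=R$ cannot be a Nash flow on the putative matching network $G'$. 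Your proposal addresses neither point, and without (ii) it is not established that the single equation \eqref{eq_bawa} captures the worst case.
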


We prove the theorem by making several of the same reductions as in Theorem~\ref{thm_avg}. However, we now derive an optimal network-aware toll scaling factor $\ksg$ and show a unique network will realize the upper bound on inefficiency in this setting.

\begin{proposition}\label{prop_opt_awa}
For a network $G\in \gee$ with price sensitivity distributions $s \in \sdistm$  with extreme sensitivity distributions $\slow = (S_{\mathpzc{l}1},S_{\mathpzc{l}2})$ and $\supp = (S_{\mathpzc{u}1},S_{\mathpzc{u}2})$, the optimal toll scaling factor for a scaled marginal cost toll will take the form,
\begin{equation}
    k_{(\sbar,G)} = \frac{1}{\sqrt{\slone \sutwo}}.
\end{equation}
\end{proposition}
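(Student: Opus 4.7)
The plan is to adapt the strategy from the proof of Proposition~\ref{prop_k_avg}: show that the optimal scaling factor $k$ is precisely the one that equates the price of anarchy induced by the two candidate worst-case distributions $\slow$ and $\supp$, and then confirm this balancing point is the minimizer via a monotonicity argument on each branch.

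By Lemma~\ref{lemma_slow_supp}, it suffices to minimize $\max\{\poa(G,\slow,\Tsmc(k)),\poa(G,\supp,\Tsmc(k))\}$ over $k$. From the definitions of these extreme distributions (see the proof of Lemma~\ref{lemma_Ga_Gb}), the indifferent sensitivity under $\slow$ is $\slone$ and under $\supp$ is $\sutwo$. Using the indifference condition~\eqref{eq_sind_def} and the optimal flow~\eqref{eq_Nash_and_opt_flow}, a direct calculation yields the deviation
\begin{equation*}
\fnash_1 - \fopt_1 = \frac{(b_2-b_1)(1-S_{\rm ind} k)}{2(1+S_{\rm ind}k)(a_1+a_2)},
\end{equation*}
where $S_{\rm ind}\in\{\slone,\sutwo\}$ depending on the distribution. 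Since $\mathcal{L}(f) = \mathcal{L}(\fopt) + (a_1+a_2)(f_1-\fopt_1)^2$ is quadratic in $f_1$ with the same minimizer $\fopt$ regardless of $k$, the induced price of anarchy is a strictly increasing function of $(f_1-\fopt_1)^2$. Equating the two price-of-anarchy values is therefore equivalent to equating the absolute flow deviations.

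For $k\in(1/\sutwo,1/\slone)$ the $\slow$-deviation is positive while the $\supp$-deviation is negative, so equating absolute values gives
\begin{equation*}
\frac{1-\slone k}{1+\slone k} = \frac{\sutwo k - 1}{1+\sutwo k}.
\end{equation*}
Cross-multiplying and cancelling collapses this to $\slone\sutwo k^2 = 1$, yielding $k = 1/\sqrt{\slone\sutwo}$. Since $\slone<\sutwo$, this root always lies inside the interval $(1/\sutwo,1/\slone)$.

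Finally I would confirm that this balancing $k$ is the worst-case minimizer by a monotonicity check: on $(1/\sutwo,1/\slone)$ the $\slow$-PoA is strictly decreasing in $k$ (its deviation magnitude shrinks to zero at $k=1/\slone$), while the $\supp$-PoA is strictly increasing (its deviation grows from zero at $k=1/\sutwo$); outside this interval both PoAs move in the same direction, so one can always improve by stepping back into the interval. The main obstacle I anticipate is handling degenerate edge cases—in particular when $b_2=b_1$ (both deviations identically zero) or when one of $\slow,\supp$ degenerates to a homogeneous population so that the relevant indifferent sensitivity is pinned at $\bmin$ or $\bmax$. In these cases the balancing equation still yields the stated $k$, but uniqueness may fail and a separate boundary argument is needed to show that $1/\sqrt{\slone\sutwo}$ remains (tied for) optimal.
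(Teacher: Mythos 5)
Your proposal is correct in its core computation, but it takes a genuinely different route from the paper. The paper's proof is a two-line reduction: by Lemma~\ref{lemma_bimodal}, the Nash flows induced by $\sdistm$ coincide with those induced by unconstrained distributions supported on $[\slone,\sutwo]$, and then the geometric-mean formula $k=(\slone\sutwo)^{-1/2}$ is imported wholesale from the cited network-aware, mean-agnostic analysis (\cite{Brown2017c}, mirroring the $k^{\rm gm}=(\bmin\bmax)^{-1/2}$ of Theorem~\ref{thm_net}). You instead re-derive that result from scratch for two-link affine networks: the decomposition $\mathcal{L}(f)=\mathcal{L}(\fopt)+(a_1+a_2)(f_1-\fopt_1)^2$, the explicit deviation $\fnash_1-\fopt_1\propto(1-S_{\rm ind}k)/(1+S_{\rm ind}k)$, and the observation that balancing the two absolute deviations collapses to $\slone\sutwo k^2=1$. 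Your version is longer but makes the mechanism transparent and verifiable without an external citation; the paper's version is shorter but opaque. Two shared caveats worth noting. First, the extreme distributions are $\slow=s_\mathpzc{l}^{(\sbar,G,k)}$ and $\supp=s_\mathpzc{u}^{(\sbar,G,k)}$ --- their types $\slone,\sutwo$ themselves depend on $k$ (as the later use of \eqref{s_2_p_2} confirms), so the stated formula is really a fixed-point condition; your balancing argument treats $\slone,\sutwo$ as constants while sweeping $k$, and neither you nor the paper verifies consistency of that fixed point or uniqueness of the crossing. Second, your quadratic decomposition presumes the optimal and Nash flows are interior; the paper's own network-aware, mean-agnostic proposition shows that when a population saturates one link the geometric-mean toll is no longer forced (and $k=0$ becomes optimal), a boundary regime you correctly flag but do not resolve --- and which the paper's proof of this proposition silently ignores as well.
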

\emph{Proof of Proposition~\ref{prop_opt_awa}:}
From Lemma~\ref{lemma_bimodal}, under the same tolling mechanism, the set of Nash flows caused by $\sdistm$ is equivalent to those caused by distributions with bounds $[\slone,\sutwo]$ and no mean constraint. The optimal scaling factor will therefore minimize the price of anarchy over this set of distributions. From \cite{Brown2017c}, the optimal scaling factor for a scaled marginal-cost toll will take this form.
$\qed$

In Lemma~\ref{lemma_glc}, it was shown that a transformation from a network $G\in \gee$ to a network $\hat{G} \in \geelc$ will increase the price of anarchy; we also note that this transformation had no dependence on the toll scaling factor and we can thus choose a $k$ that is optimal for the resulting network.
\begin{corollary}\label{cor_Tsg}
When making a reduction from $G\in \gee$ to $\hat{G}\in \geelc$, the price of anarchy increases regardless of the toll scaling factor $k$, including when $k = \ksg$ for each network before and after the reduction. 
\end{corollary}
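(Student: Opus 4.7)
My plan is to obtain Corollary~\ref{cor_Tsg} as a short chain of two inequalities that combines the pointwise statement of Lemma~\ref{lemma_glc} with the defining optimality of the toll scaling factor $\ksg$, exploiting the fact that Lemma~\ref{lemma_glc}'s reductions were in fact established uniformly in the toll parameter.

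First I would revisit the proof of Lemma~\ref{lemma_glc} to extract a structural observation: each of its three reductions---stripping the constant $b_1$ from link~$1$, multiplying both latency functions by a uniform scalar $c>0$, and perturbing the remaining link-$2$ parameter---\emph{constructs} the new network $\hat G\in\geelc$ from data about $G$ and its extremal Nash flows alone, while the proven inequality $\poa(G,\sdistm,\Tsmc(k))\le\poa(\hat G,\sdistm,\Tsmc(k))$ holds \emph{for every} admissible $k$. Specializing this $k$-uniform bound to $k=k_{(\sbar,\hat G)}$ yields
\[
\poa\bigl(G,\sdistm,\Tsmc(k_{(\sbar,\hat G)})\bigr)\le\poa\bigl(\hat G,\sdistm,\Tsmc(k_{(\sbar,\hat G)})\bigr).
\]
Next I would invoke the defining property of $\ksg$ as a minimizer of the map $k\mapsto\poa(G,\sdistm,\Tsmc(k))$, which immediately gives
\[
\poa\bigl(G,\sdistm,\Tsmc(\ksg)\bigr)\le\poa\bigl(G,\sdistm,\Tsmc(k_{(\sbar,\hat G)})\bigr).
\]
Chaining these two displays produces $\poa(G,\sdistm,\Tsmc(\ksg))\le\poa(\hat G,\sdistm,\Tsmc(k_{(\sbar,\hat G)}))$, which is exactly the corollary's assertion that the PoA weakly increases under the reduction even when each network is equipped with its own optimal scaling factor.

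The only substantive obstacle is verifying the claimed $k$-uniformity of Lemma~\ref{lemma_glc}'s third reduction, since the direction of its perturbation hinges on the sign of $\fnash_2-\fopt_2$, which in principle depends on $(s,k)$. I would discharge this by appealing to Lemmas~\ref{lemma_bimodal} and~\ref{lemma_slow_supp}: the supremum defining the PoA is attained at one of the two extremal bimodal distributions $\slow$ or $\supp$, under which users are segregated by link. This segregation makes the PoA-increasing perturbation of link~$2$ inert on the Nash flow itself, so the construction of $\hat G$ is valid at \emph{any} toll factor that is subsequently chosen---in particular at $k_{(\sbar,\hat G)}$---which is the ingredient needed to close the chain of inequalities above.
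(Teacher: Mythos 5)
Your proposal is correct and follows essentially the same route as the paper: both arguments chain the $k$-uniform inequality of Lemma~\ref{lemma_glc} evaluated at $k=k_{(\sbar,\hat G)}$ with the defining optimality of $\ksg$ as a minimizer of $k\mapsto\poa(G,\sdistm,\Tsmc(k))$, yielding $\poa(G,\sdistm,\Tsmc(\ksg))\le\poa(G,\sdistm,\Tsmc(k_{(\sbar,\hat G)}))\le\poa(\hat G,\sdistm,\Tsmc(k_{(\sbar,\hat G)}))$. Your additional check that the third reduction in Lemma~\ref{lemma_glc} is genuinely uniform in $k$ (via the segregation of users in the extremal bimodal distributions) is a point the paper asserts without elaboration, but it does not change the approach.
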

\emph{Proof:} In the proof of Lemma \ref{lemma_glc}, the relation between $k$ and the price of anarchy was not used; instead, it was shown that the price of anarchy increases as the network is transformed from any two link network, to one that was in $\geelc$.
Consider having network $G$ with the non-optimal toll scaling factor $k_{(\sbar,\hat{G})}$. When the reduction from $G$ to $\hat{G}$ is done, by Lemma \ref{lemma_glc} we have
\begin{align*}
    \text{PoA}(G,\sdistm,\Tsmc(k_{(\sbar,G)}) & \leq \text{PoA}(G,\sdistm,\Tsmc(k_{(\sbar,\hat{G})})\\
    &\leq \text{PoA}(\hat{G},\sdistm,\Tsmc(k_{(\sbar,\hat{G})}). ~~~~~~~ \qed
\end{align*}

\noindent \emph{Proof of Theorem~\ref{thm_all}:}
It is shown in Lemma~\ref{lemma_Ga_Gb} that a set of two networks realizes the price of anarchy. The price of anarchy for the network $G_\beta$ is found by \eqref{eq_network_ratio}. Now, let $G'$, defined by $\beta'$, be a network that has the same price of anarchy when the flow $R$ is on the first link, 
One solution  is clearly $\beta=\beta'$, the other is $\beta' = \frac{(4-\beta)R}{R^2-\beta R +\beta}$. Using the cost function of network $G$, we have $\beta = (1+\bmin k_{(\sbar,G)})R$. Thus, if $\beta'$ satisfied $\supp = (\bmin,\bmax)$ for the same mean sensitivity, then
\begin{align}\label{eq_betas_equal_a}
    \beta' &= \frac{(4-\beta)R}{R^2-\beta R +\beta} = (1+\bmax k_{(\sbar,G')})R.
\end{align}
However, it can be shown that the right hand side of \eqref{eq_beta_equal} is strictly less than the left hand side. This imposes that the flow $f_1=R$ can not be a Nash flow in $G'$ under distributions in $\sdistm$ and therefore not achieve the same price of anarchy as $G$. This implies that the price of anarchy for $G_\beta$ is greater than that of $G_\alpha$ when both are tolled optimally with respect to Proposition~\ref{prop_opt_awa}.

As this network is optimally tolled, from Proposition~\ref{prop_opt_awa}, it will be the case that
\begin{eqnarray}
    S_{\mathpzc{u}2} = \frac{\sbar-\bmin}{1+R-\beta} +\bmin. \label{s_2_p_2}
\end{eqnarray}
Now, in sensitivity distribution $\slow = (\bmin,\bmax)$, users with sensitivity $\bmin$ are indifferent with optimally scaled toll $\ksg$. Using $\beta$ from \eqref{eq_beta_equal} and substituting the optimal scaling factor with extreme sensitivity from \eqref{s_2_p_2} leads to the characterization of $\beta$ in the theorem statement, and the price of anarchy is found by substituting this into \eqref{eq_network_ratio}.
$\qed$

\section{Conclusion}
This paper represents a study of the challenges and opportunities afforded to a system designer when faced with information.
While it is clear that gathering additional information about a problem can help a designer to influence behavior more effectively, the question of how to use the information optimally is not trivial, even for simple classes of problems.
Ongoing work focuses on extending this paper's results to more general classes of networks and determining when precisely scaled marginal-cost taxes are optimal over all taxation mechanisms.


\bibliographystyle{IEEEtran}
\bibliography{library}

\end{document}